\documentclass{article}
\usepackage{amsmath, amssymb, amsthm}
\usepackage[margin=1in, top=1in, bottom=1in]{geometry}
\usepackage{graphicx}
\usepackage{subcaption}
\usepackage {todonotes}
\usepackage{url}

\newtheorem{lemma}{Lemma}
\newtheorem{theorem}{Theorem}
\newtheorem{corollary}{Corollary}
\newtheorem{definition}{Definition}
\newtheorem{proposition}{Proposition}

\usepackage{algorithm2e}
\usepackage{algpseudocode}
\usepackage{hyperref}

\newcommand{\cM}{\ensuremath{\mathcal{M}}}

\newcommand{\cD}{\ensuremath{\mathcal{D}}}

\newcommand{\remove}[1]{}

\title{Towards Efficient Data Structures for Approximate Search with Range Queries}

\author{Ladan Kian \and Dariusz R. Kowalski}

\begin{document}

\maketitle

\begin{center}
Department of Computer Science, Augusta University\\
lkian@augusta.edu \quad dkowalski@augusta.edu
\end{center}

\maketitle

\begin{abstract}
Range queries are simple and popular types of queries used in data retrieval.
However, extracting exact and complete information using range queries is costly. 
As a remedy, some previous work proposed a faster principle, {\em approximate} search with range queries, also called single range cover (SRC) search. It can, however, produce some false positives. 
In this work we introduce a new SRC search structure, a $c$-DAG (Directed Acyclic Graph), which provably decreases the average number of false positives by logarithmic factor while keeping asymptotically same time and memory complexities as a classic tree structure. A $c$-DAG is  
a tunable augmentation of the 1D-Tree with denser overlapping branches ($c \geq 3$ children per node). We perform a competitive analysis of a $c$-DAG with respect to 1D-Tree and derive an additive constant time overhead and a multiplicative logarithmic improvement of the false positives ratio, on average. We also provide a generic framework to extend our results to empirical distributions of queries, and demonstrate its effectiveness for Gowalla dataset. Finally, we quantify and discuss security and privacy aspects of SRC search on $c$-DAG vs 1D-Tree, mainly mitigation of structural leakage, which makes $c$-DAG a good data structure candidate for deployment in privacy-preserving systems (e.g., searchable encryption) and multimedia retrieval.
\end{abstract}

\section{Introduction}
\paragraph{Motivation and Applications.}

Range queries are a widely used primitive in structured data analysis, database systems, and information retrieval. They arise in a broad range of applications, including real-time analytics, large-scale data processing, multimedia database systems \cite{candan2010data, yamuna2001efficient}, as well as in secure data access~\cite{guo2023privacy}. 

In privacy-preserving settings, such as searchable encryption (SE) \cite{cash2014dynamic, hore2004privacy, fugkeaw2024secure} and federated analytics~\cite{elkordy2023federated}, range queries enable access to relevant subsets of encrypted or distributed data without revealing sensitive information. For example, in SE, encrypted queries execute over encrypted data, using lookup data structures to map encrypted queries to encrypted data. In such scenarios, the choice of underlying data structure plays a critical role in determining the system's efficiency, accuracy, and security guarantees~ \cite{heidaripour2024organizing, chase2010structured}.

However, performing exact range search can be computationally expensive (see e.g., a polynomial lower bound on range search in~\cite{BerendsohnK22} or even overlinear bound on any-query-type search in~\cite{GreenbergW1985lower}), or it can introduce additional leakage in secure environments. As a result, prior work has explored approximate range search, where results may include false positives in exchange for improved~efficiency.
In multimedia database systems, approximate range queries are essential for supporting fast similarity-based retrieval of high-dimensional feature representations of images, video, and audio. 
Due to the subjective and imprecise nature of multimedia queries, as well as the performance requirements of real-time applications, these systems often rely on range-based indexing structures to filter candidate results efficiently, see e.g.,~\cite{ferhatosmanoglu2001approximate,jagadish2005idistance}.
To enable efficient indexing and querying over high-dimensional multimedia data, a common strategy involves mapping the data to a one-dimensional space, considered in this work. Techniques such as Hilbert \cite{hilbert1891} and Z-order \cite{morton1966} space-filling curves preserve spatial locality by assigning a linear ordering to multidimensional~points. 

This paper focuses on analyzing and comparing one-dimensional data-dependent structures: the 1D-Tree (a 1D KD-Tree variant \cite{bentley1975multidimensional})
and its augmented variant, the $c$-DAG, which can serve as look up data structure components for approximate range queries. We formally characterize their structural and performance trade-offs in terms of search time, false positive ratio, and sensitivity to query length and data skewness, offering guidance for their deployment in security-sensitive and learning-driven environments.

\paragraph{Other Previous and Related Work.}

The KD-Tree, introduced by Bentley \cite{bentley1975multidimensional}, is a data structure designed for efficient partitioning of multidimensional spaces, supporting operations like range queries and nearest neighbor searches. While originally developed for general k-dimensional spaces, the KD-Tree is also highly effective in the one-dimensional case, where it partitions the data into intervals. In the 1D scenario, the KD-Tree recursively splits the data at median values, organizing the data into a balanced tree structure. 

Previous works have constructed DAGs based on domain-dependent trees, such as the Tree-based DAG (TDAG) introduced by Demertzis et al.~\cite{demertzis2018practical}, specifically designed for efficient and privacy-preserving range search operations in one-dimensional spaces. The TDAG enhances a binary tree by adding extra overlapping nodes, allowing queries to leverage multiple paths simultaneously and significantly improving query efficiency.
Similarly, Falzon et al.~\cite{falzon2023range} generalized DAG structures to multi-dimensional scenarios, specifically for Range Trees and Quad Trees. Their construction introduces intermediate nodes between existing tree nodes to efficiently cover multi-dimensional~queries.

\begin{table*}[ht!]
\centering
\small
\caption{Comparison of 1D-Tree and $c$-DAG data structures across key structural and performance metrics. 
The table summarizes: worst-case search time
and
storage complexity, where $\log M$ stands of for the length of the bit representation of a point in domain $\cM$,
and two comparative performance measures from this work: Additive overhead of expected search time 
and multiplicative overhead of expected false positive ratio, 
for query length $s$ and dataset size $N$.
}  
\label{tab:results}
\begin{tabular}{|l||c|c||c|c|}
\hline
\textbf{Structure} & \textbf{ 
Worst-case} & \textbf{Storage} & \textbf{Expected } & \textbf{Expected} \\
& {\bf Time} & & {\bf Time} & {\bf False-Positive Ratio} \\
\hline

1D-Tree & $O(\log N)$ 
\ \ {\bf \cite{cormen2009introduction}}
 & $O(N \log M)$ 
 \ \ {\bf \cite{demertzis2018practical}}
 & $T$ & 
$\ge \text{FP} \times O(\log\frac{N}{s})$ 
\ \ {\bf Thm.~\ref{thm:fp-ratio-new}}
\\
\hline
$c$-DAG  & $O(\log N)$ 
\ \ {\bf Prop.~\ref{prop:storage}}& $O(c N \log^2 N)$ 
\ \ {\bf Prop.~\ref{prop:storage}}& 
$\le T + 2 \frac{c-2}{c-1}$ 
\ \ {\bf Thm.~\ref{thm:expected-level-dif}}
& $\text{FP}$ \\ 
\hline
\end{tabular}
\end{table*}

\paragraph{Our Contribution.}
While relevant prior work focused on domain-partitioned trees, 
in this work, we propose a competitive approach to analyze the performance of data-dependent structures. We propose the $c$-DAG family, extending data-dependent trees with tunable branching $c \geq 3$ (recovering the 1D-Tree at $c=2$) to support approximate SRC-search on one-dimensional data of size $N$ in domain $\cM$. These structures offer a richer poset of overlapping intervals, enabling finer canonical covers without domain-fixed partitions (unlike TDAG~\cite{demertzis2018practical}). 

Our core tool is the probabilistic level-difference distribution (LDD) between returned nodes by the SRC-search on 1D-Tree and $c$-DAG (Theorem ~\ref{thm:prob-dis}). From it, we derive: \\ 
(i) an expected additive constant search time overhead of at most $2 \cdot \frac{c-2}{c-1}$ for $c$-DAG vs. 1D-Tree (Theorem.~\ref{thm:expected-level-dif}),  \\
(ii) an expected multiplicative logarithmic reduction in false positives by $\Theta(\log (N/s))$ (Theorem~\ref{thm:fp-ratio-new}).  

We extend both results to arbitrary data skew via a statistical framework (Sec.~\ref{sec:skewed}), validated experimentally on the real world Gowalla dataset (Sec.~\ref{sec:experiments}). Finally, we quantify and discuss security and privacy aspects of SRC-search on $c$-DAG vs 1D-Tree, mainly mitigation of structural leakage (Section~\ref{sec: security}), which makes $c$-DAG a good data structure candidate for deployment in privacy-preserving systems (e.g., searchable encryption) and multimedia retrieval. Table~\ref{tab:results} summarizes these trade-offs.

\paragraph{Paper Overview.}
Section~\ref{sec:preliminaries} contains definitions and preliminaries. Section~\ref{sec:distribution} presents and studies our main technical tool: Level Difference Distribution (LDD).
Sections~\ref{sec:time difference} and~\ref{sec:false positive ratio} contain technical analyses of the two measures in case of uniform datasets,
which are then generalized in Section~\ref{sec:skewed} to non-uniform cases that comply with LDDs from 
Theorem~\ref{thm:prob-dis}
and evaluated experimentally in Section~\ref{sec:experiments}.
Section~\ref{sec: security} analyzes security properties (structural leakage), and Section~\ref{sec:conclusion} presents conclusions and open directions.

\section{Technical Background:
Structures, Operations and~Measures}
\label{sec:preliminaries}

Let $(\mathcal{M},dist)$ denote a one–dimensional metric space, where $\mathcal{M} \subseteq  \mathbb{R}$ is the domain equipped with the Euclidean metric $dist(x,y)=|x-y|$ for all $x,y \in \mathcal{M}$. Consider a finite dataset $\mathcal{D}\subseteq\mathcal{M}$, uniquely indexed by integers in $ \mathcal{N}= \{1, \ldots, N\}$, i.e., each element $x\in \mathcal{D}$ has assigned a unique index in $\mathcal{N}$. 
We assume that indexing is consistent with the order of elements in $\mathcal{D}$ in the metric space -- this assumption is made only to simplify the presentation and could be easily removed.
An index can be represented using $O(\log N)$ bits; to avoid too many parameters, we also assume that each point value is represented using $O(\log N)$ bits.

To support search on $\mathcal{D}$, we employ hierarchical data structures, i.e., nodes organized in levels and connected by some links in-between neighboring levels, tailored to the dataset. In these structures, each node $v$ is assigned a canonical range of 
the domain
(i.e., an interval) denoted by 
$\mathrm{range}_v \subseteq \mathcal{M}$. 
Each node also stores the subset of data points 
$\mathcal{D}_v = \mathcal{D} \cap range_v$ 
that lie within that range. 
Each node has links to some nodes whose canonical  intervals are contained in its range -- they correspond to links between some nodes in subsequent levels. Links connect each coarser (parent) node to its finer (children) nodes, whose intervals are contained in the parent’s interval.
Data structure supporting search operation takes a query $Q\subseteq\mathcal{M}$ as input, and returns some nodes in the data structure whose union of ranges contain
the query. 
Note that ranges of nodes of the data structure are defined on the space of indices $\mathcal{N}$, which means that these structures are {\em data-dependent}: an element in $\mathcal{M}$ may be assigned to  different nodes $v,w$ for two different datasets $\mathcal{D},\mathcal{D'}$, i.e., belong to sets $\mathcal{D}_v,\mathcal{D'}_w$, resp.

In this work we focus on search operation that supports range queries, over the domain $\mathcal{M}$, of the form $Q=[x,x+s)\subseteq\mathcal{M}$.
These structures, known as range-supporting data structures, are designed to efficiently organize and query data over the dataset $\mathcal{D}$.

\begin{definition}[Range-Supporting Data Structure~\cite{falzon2023range}]
\label{def:RSDS}
A range-supporting data structure for a dataset $\mathcal{D}$ with domain $\mathcal{M}$ is a pair $(G, RC)$, where:
\begin{enumerate}
    \item $G$ is a connected directed acyclic graph (DAG).
    \item Each node $v$ of $G$ has a canonical interval, $\mathrm{range}_v$, on $\mathcal{M}$.  
    Each node $v$ also stores $\cD_v$.
    \item $G$ has a single source node $s$ (root node), whose range is the entire dataset, i.e., 
    $\mathrm{range}_s = \mathcal{M}$. 
    For each non-leaf node $v$ of $G$, we have $\mathrm{range}_v = \bigcup_{(v,w) \in G} \mathrm{range}_w$.
    \item $RC$, called the {\em Range Covering} algorithm, is a polynomial-time algorithm that takes as input the DAG $G$ and a range query $Q$ on domain $\mathcal{M}$, and returns a subset $W$ of nodes~of~$G$, called a cover of 
    $Q$, such that 
    $Q \subseteq \bigcup_{w \in W} \mathrm{range}_w$.
\end{enumerate}
\end{definition}

{\em Approximate (range) search} provides a principled trade-off between efficiency and accuracy: by allowing false positives, i.e., additional data points of the dataset outside of the query range $Q$, it enables sublinear traversal and enhanced privacy. These can be filtered client-side to obtain exact results.
We consider the {\em Single Range Cover (SRC)} primitive, presented in \cite{demertzis2018practical, demertzis2016practical} --- an approximate search for range queries that identifies a \underline{single node} $v$ whose canonical interval $\mathrm{range}_v$ fully contains the query $Q$, chosen to minimize over-coverage among the nodes' sets $\mathrm{range}_w$. 
More specifically, given a query $Q = [x, x+s) \subset \mathcal{M}$, the SRC-search returns the node $v$ corresponding to the unique inclusion-minimal canonical subinterval fully containing $Q$: 
node $v$ of the data structure such that $Q\subseteq \mathrm{range}_v$
and no strict descendant $w$ of $v$ satisfies 
$Q\subseteq \mathrm{range}_w$.
This yields a minimal superset of $Q$, with false positives comprising points in 
$\mathcal{D}_v \setminus (\mathcal{D} \cap Q)$. 
Given a data structure storing dataset $\mathcal{D}$, we will interchangeably use terminologies ``the SRC-search on the data structure returns the result'' and ``the data structure returns the result (of the SRC-search)''.

\begin{algorithm}
\caption{$SRC(G,Q,s)$ : SRC-search on DAGs \cite{falzon2023range}}
\label{alg:SRC}
\begin{algorithmic}[1]
\Require A range-supporting DAG $G$, a query $Q$, and the root node $s$ of $G$
\Ensure The node $\text{cand}$ whose interval is the minimal cover of $Q$
\State $\text{cand} \gets \text{null}$
\State \textbf{if} 
$Q \subseteq \mathrm{range}_s$
\textbf{then} $\text{cand} \gets s$
\State \textbf{for} $(s,w) \in G$ and $w$ is not labeled as explored \textbf{do}
    \State \ \ \ \ $t \gets \Call{SRC}{G, Q, w}$
    \State \ \ \ \ \textbf{if} $|\cD_t| <  |\cD_{\text{cand}}|$ 
    \textbf{then} 
                $\text{cand} \gets t$
    \State \Return $\text{cand}$
\end{algorithmic}
\end{algorithm}

\subsection{1D-Tree and $c$-DAG Data Structures}

{\bf The 1D-Tree} is a binary tree specialized from the KD-Tree~\cite{bentley1975multidimensional} to the one-dimensional case. It satisfies Definition~\ref{def:RSDS} of range-supporting structures. Unlike standard binary search trees~\cite{cormen2009introduction}, it is data-dependent: splits occur at empirical medians, yielding a balanced structure attuned~to~dataset~$\mathcal{D}$.

\begin{definition}[1D-Tree]
\emph{1D-Tree} is a range-supporting data structure defined recursively:
\begin{itemize}
\item Each internal node $v$ has two children, $v_L$ and $v_R$, whose canonical  intervals partition $\mathrm{range}_v$ into two halves by the median point of $\cD_v$. 
\item Leaves at the last level represent single points of $\cD$.
\end{itemize}
\end{definition}

\noindent\textbf{\em SRC–search on 1D–Tree.}
Given a query interval $Q\subseteq\mathcal{M}$, the SRC-search starts at the root and descends along edges: at each node $v$, if 
$Q\subseteq \mathrm{range}_v$, $v$ is marked as a candidate and the procedure continues into any child whose interval still fully contains $Q$. The descent terminates at the first node $v$ for which no child’s range contains $Q$, and $v$ is returned. By construction and direct inductive argument, $v$ is the unique deepest node whose range fully contains $Q$, i.e., the inclusion-minimal cover of $Q$ within the 1D-Tree.

\noindent\textbf{\em Size and storage of 1D-Tree.}
For a 1D-Tree with $N$ data points, the structure has $\Theta(N)$
nodes and 
$\Theta(N)$ links, and height $\Theta(\log N)$ (balanced under medians). 
Each point could be stored in the logarithmic number of sets $\mathcal{D}_v$, and each node represented by $O(\log N)$ bits.
Writing the height as $H$ (root at depth $0$), we have
$
\sum_v |\mathcal{D}_v| = N(H+1)
$. 
For the balanced 1D-Tree, $H=\lfloor \log N \rfloor$, so the payload storage is $N(\lfloor \log N \rfloor+1)=\Theta(N\log N)$. Thus, after multiplying it by the logarithmic memory of storing a point, the total storage is $\Theta(N\log ^2 N)$. Demertzis et al. \cite{demertzis2018practical} presented a sightly different way of storage for binary trees in $O(N \log M)$ bits, where $\log M$ stands of for the length of the bit representation of a point in domain $\cM$.

\smallskip

\noindent
{\bf The $c$-DAG} is a family of directed acyclic graphs designed to enhance the 1D-Tree by introducing tunable overlapping intervals. These overlaps enable more precise canonical covers for SRC-search, making the $c$-DAG the first data-dependent design of its kind. It extends beyond earlier data-independent methods like TDAG introduced in ~\cite{demertzis2016practical, demertzis2018practical}.
The $c$-DAG is constructed on top of the 1D-Tree, in which each node's canonical interval $\mathrm{range}_v$ is recursively divided into $c \geq 3$ balanced child intervals by (i) balancing the induced subintervals (so that each child contains roughly the same number of points from $\mathcal{D}$), and (ii) placing the $(c-2)$ augmented children between the two 1D-Tree children so that adjacent children overlap roughly equally (their endpoints are chosen between the adjacent medians). This guarantees adjacent overlaps of equal width across the $c$ children.
Edges are defined through the inclusion poset, connecting parent nodes to their child nodes. Each node has $c$ children and 
up to $c$ parents, thereby enriching the structural expressiveness without being tied to a fixed domain.

\begin{definition}[$c$-DAG Family]
The \emph{$c$-DAG family} consists of directed acyclic graphs is a range-supporting data structure constructed recursively for branching factor $c \geq 3$ 
as follows: 
\begin{itemize}
\item Each internal node $v$ has $c$ children $w_1,\dots,w_c$. Let $  |\mathcal{D}_v|  $ be the number of data points associated with the node $  v  $.
Split $  \mathcal{D}_v  $ at its median:
$  w_1  $ covers the first $  \lfloor |\mathcal{D}_v|/2 \rfloor  $ points (left data half),
$  w_c  $ covers the remaining points (right data half).

Insert $  c-2  $ middle children, $  w_2  $ to $  w_{c-1}  $, each containing $\left\lceil \frac{|\mathcal{D}_v|}{2}\right\rceil$ consecutive points of $\mathcal{D}_v$ starting from position $\left\lfloor \frac{|\mathcal{D}_v|}{2(c-1)}\right\rfloor  +1, 2\cdot \left\lfloor\frac{|\mathcal{D}_v|}{2(c-1)}\right\rfloor +1,\ldots, (c-2)\cdot \left\lfloor\frac{|\mathcal{D}_v|}{2(c-1)}\right\rfloor +1$, respectively. 

\item Leaves at the last level represent single points of $\cD$.
\end{itemize}
\end{definition}

\noindent\textbf{SRC–search on $c$-DAG.}
Given a query interval $Q\subseteq\mathcal{M}$, the SRC-search starts at the root and descends along edges: at each node~$v$, if $Q\subseteq \mathrm{range}_v$, $v$ is marked as a candidate and the procedure continues into any child whose interval still fully contains $Q$. The descent stops at the first node for which no child’s interval contains $Q$, and the algorithm returns the node. Because of overlaps, the deepest node need not be unique: there may exist multiple nodes whose intervals fully contain $Q$ and have no child containing $Q$; all such nodes occur at the same level. A fixed tie–breaking policy yields a single return node. In all cases, the returned node is an inclusion–minimal cover of $Q$, thereby minimizing over-coverage and reducing false positives.

\noindent\textbf{Size and storage of $c$-DAG.} 
For a $c$-DAG with $N$ data points, the structure is recursively built across $H = \lfloor \log N \rfloor$ levels. At each level $\ell$, starting from level $0$, the total number of nodes is $(c-1) 2^\ell -( c-2)$. Summing over all levels, the total number of nodes in the $c$-DAG is $\Theta(c N)$. Each non-leaf node contributes $c$ outgoing links to its children. Given that the number of nodes at each level grows exponentially, the total number of links is roughly proportional to the total number of nodes multiplied by the branching factor $c$. Therefore, the total number of links is on the order of $\Theta(c^2 N)$.

Due to overlaps, each point can appear in $\Theta(c)$ sets $\mathcal{D}_v$ per level, and each node is represented by $O(\log N)$ bits. 
Writing the height as $H$ (root at depth $0$), we have
\[
\sum_v |\mathcal{D}_v| = \Theta\!\big(c\,N\,(H+1)\big)
\ .
\]
For the balanced $c$-DAG, $H=\lfloor \log N \rfloor$, so the payload storage is $\Theta(cN\log N)$. 
Thus, taking into account logarithmic representation of each point, the total storage is $\Theta(cN\log^2 N)$.
We summarize it in the following:

\begin{proposition}
\label{prop:storage}
For a given natural number $c\ge 3$, the $c$-DAG can be stored in $O(c N \log^2 N)$ memory bits and the SRC-search can be completed in $O(\log n)$ time steps.
\end{proposition}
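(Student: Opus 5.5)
The plan has two independent parts: a storage bound obtained by counting payload over levels, and a time bound obtained by showing the SRC-search follows a single root-to-leaf chain of length $O(\log N)$ with $O(c)$ work per level. (The $n$ in the statement is the dataset size $N$.)

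\emph{Storage.} First I will fix the level structure. By induction on the level $\ell$, I will show that every node at level $\ell$ holds either $\lfloor N/2^\ell\rfloor$ or $\lceil N/2^\ell\rceil$ points. This holds because each of the $c$ children of $v$ receives $\lfloor |\cD_v|/2\rfloor$ or $\lceil |\cD_v|/2\rceil$ consecutive points. Consequently the height is $H=\lfloor \log N\rfloor$, and leaves carry single points.

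Next I will count the nodes at level $\ell$. Children of $v$ start at offsets that are multiples of $\lfloor |\cD_v|/2(c-1)\rfloor$, and adjacent parents at level $\ell$ share endpoints on this grid. Therefore distinct children of neighbouring parents coincide, and level $\ell$ consists of at most $(c-1)2^\ell-(c-2)$ distinct nodes (the count asserted above). Even without identifying coinciding children, a cruder bound of $c$ children per parent times the number of parents is enough for the payload argument that follows.

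With this, the payload at level $\ell$ is $\sum_{v\in\text{level }\ell}|\cD_v| \le \big((c-1)2^\ell\big)\cdot\lceil N/2^\ell\rceil = O(cN)$. Summing over the $H+1=O(\log N)$ levels gives $O(cN\log N)$ stored point entries. Each entry takes $O(\log N)$ bits under the standing assumption on point representation, so the payload costs $O(cN\log^2 N)$ bits.

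The remaining items are lower order. Each node stores its interval endpoints in $O(\log N)$ bits, giving $O(cN\log N)$ bits in total. Each node also stores $c$ child pointers of $O(\log(cN))$ bits each, giving $O(c^2N\log(cN))$ bits. This is absorbed into $O(cN\log^2N)$ when $c=O(\log N)$. For larger $c$ I will instead store only the offsets that determine the children, which costs $O(\log N)$ bits per node, since the $i$-th child of $v$ is computable from $v$'s interval and $i$.

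\emph{Time.} The key observation is that the SRC descent never branches nontrivially. At node $v$ it tests its $c$ children, each test taking $O(1)$ comparisons of interval endpoints. It then continues into one child containing $Q$, using the fixed tie-breaking rule; the rule is justified because all containing deepest nodes lie on the same level and any one is an inclusion-minimal cover.

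The main obstacle is that Algorithm~\ref{alg:SRC} as written recurses into every child. I must therefore argue that pruning to a single containing child does not lose the minimum. Suppose two children $w_i,w_j$ both contain $Q$, and let $v'$ be a deepest containing node reachable from $w_j$. I will show that $v'$ is also covered at the same level by some node reachable from $w_i$, or has equal size. The argument uses data balance: all nodes at a given level have sizes differing by at most one, so the minimum of $|\cD_t|$ is attained by any deepest containing node. It is then enough to show that the depth reached does not depend on which containing child is chosen. For this I will use that $Q$ is contained in $w_i\cap w_j$, and that children of overlapping siblings tile the overlap region with the same grid granularity. Granting this, the descent visits $H+1=O(\log N)$ levels at cost $O(c)$ each, so the search takes $O(c\log N)$ time, which is $O(\log N)$ for constant $c$. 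I will state this dependence on $c$ explicitly.
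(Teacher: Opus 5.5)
Your storage count is essentially the paper's own argument: per-level node count $(c-1)2^\ell-(c-2)$, payload $O(cN)$ per level, $O(\log N)$ levels, and $O(\log N)$ bits per entry. Charging bits for the $\Theta(c^2N)$ links improves on the paper, which counts the links but never pays for them. The paper gives no argument for the time bound beyond the $O(\log N)$ height, so your attempt goes further there, but that is also where the gap is.

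\textbf{The gap.} Everything rests on the claim that single-child descent reaches the same depth whichever containing child the tie-break picks. You only grant this, on the strength of ``children of overlapping siblings tile the overlap with the same grid granularity''. Under the floored definition this is false for general $c\ge 3$. Take $c=4$, $N=64$, in index space:
\begin{itemize}
\item The root's children are $[0,32)$, $[10,42)$, $[20,52)$, $[32,64)$. The middle offsets are multiples of $\lfloor 64/6\rfloor=10$ and the right half starts at $32$.
\item The query $Q=[32,48)$ lies in both $[20,52)$ and $[32,64)$.
\item The children of $[20,52)$ are $[20,36)$, $[25,41)$, $[30,46)$, $[36,52)$, and none contains $Q$. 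But $[32,48)$ is itself a child of $[32,64)$.
\end{itemize}
So a descent that breaks the level-$1$ tie toward $[20,52)$ returns $32$ points instead of $16$. The cause is that the right-half offset $|\mathcal{D}_v|/2$ is not a multiple of the step $\lfloor|\mathcal{D}_v|/(2(c-1))\rfloor$, so sibling grids drift apart.

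The same drift breaks the node count that you and the paper both use: level $2$ here already has $14$ distinct nodes, not $3\cdot 4-2=10$. Your fallback, that ``$c$ children per parent times the number of parents'' suffices, does not rescue this. Iterated without identifying coinciding children, it gives $c^\ell$ nodes and payload $(c/2)^\ell N$ at level $\ell$.

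\textbf{How to repair it.} Make the alignment hypothesis explicit: every step $|\mathcal{D}_v|/(2(c-1))$ is a positive integer. For $N=2^{n+1}$ and $c=2^\alpha+1$ this holds at all levels whose nodes have at least $2(c-1)$ points. Under it:
\begin{itemize}
\item The level-$\ell$ nodes are exactly the windows $[m\delta_\ell,(m+c-1)\delta_\ell)$ with $\delta_\ell=2^{n-\ell+1}/(c-1)$, so the count $(c-1)2^\ell-(c-2)$ is correct.
\item The children of $w$ are exactly the level-$(\ell+1)$ windows contained in $w$.
\end{itemize}
Your lemma then has a short proof. Suppose $Q\subseteq w$ and $Q\subseteq u$ for some level-$(\ell+1)$ window $u$. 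Either $u\subseteq w$, so $u$ is a child of $w$. Or $u$ protrudes past an endpoint of $w$; since $u$ has half the length of $w$, $Q$ then lies in the left or right half of $w$, both of which are children. By induction, greedy descent reaches the deepest containing level.

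Below those levels the step degenerates to $0$, and pruning can genuinely fail. For $c=5$, the node $[a,a+4)$ has only the children $[a,a+2)$ and $[a+2,a+4)$, missing the node $[a+1,a+3)$.

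The robust alternative is not to prune at all. Follow every containing child with explored-marks, as the paper's description of SRC-search does, and bound the frontier. At aligned levels at most $c-1$ grid windows contain $Q$. At the degenerate bottom levels a node has fewer than $2(c-1)$ points, so fewer than $2(c-1)$ nodes of that level contain $Q$. This gives $O(c^2\log N)$ time, i.e.\ $O(\log N)$ for fixed $c$.
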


For simplicity,
see also its illustration in Figure~\ref{fig 1d-dag} over a uniform dataset of size $N=16$, where orange intervals denote the added overlaps atop the base 1D-Tree~(blue).

\begin{figure*}
    \centering
    \includegraphics[width=0.9\textwidth, height=3cm]{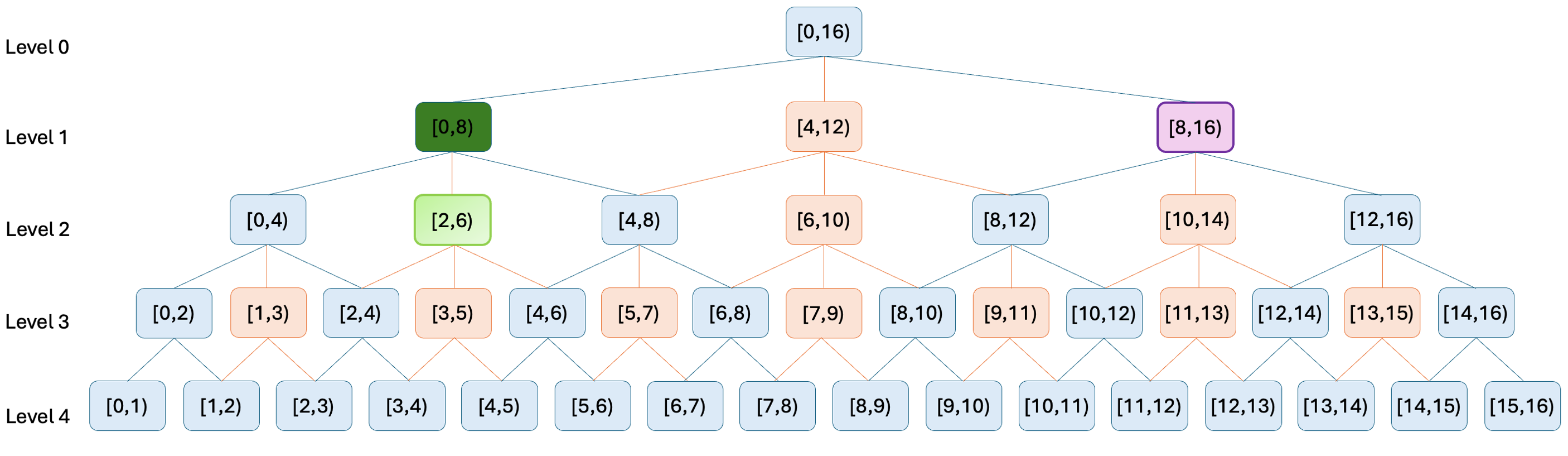}
    \caption{DAG constructed over a dataset of size 16. The orange intervals indicate augmented overlapping nodes added on the top of the corresponding 1D-Tree (blue intervals). For $Q_1=[2,6)$, SRC-search on 1D-Tree returns the level 1 node $[0,8)$ while 
    3-DAG 
    returns the level 2 node $[2,6)$. For $Q_2=[11,15)$ (same length), SRC-search on both data structures returns the level~1~node~$[8,16)$.}
    \label{fig 1d-dag}
\end{figure*}

\subsection{Performance Measures}

To evaluate SRC-search across the $c$-DAG family (parameterized by $c \geq 2$, with the 1D-Tree as baseline at $c=2$), we focus on two metrics: additive search time overhead and multiplicative competitive false positive (FP) ratio. Our competitive analysis measures the 1D-Tree's performance relative to denser $c$-DAG variants ($c > 2$), where higher $c$ yields progressively finer granularity via added overlaps, reducing absolute FPs at modest cost, while preserving asymptotic balance and only slightly increasing the storage (linearly in $c$).

\noindent
\textbf{\em Search Time (Overhead)}.
We model {\em search time} as proportional to the depth of the returned node, denoted $\mathrm{level}_{\mathrm{c\text{-}DAG}}(Q)$ and $\mathrm{level}_{\mathrm{Tree}}(Q)$, resp. for $c$-DAG and 1D-Tree. Abstracting to levels traversed (ignoring constant per-level costs), the overhead for query $Q$ of length $s$ is $ \mathrm{level}_{\mathrm{c\text{-}DAG}}(Q) - \mathrm{level}_{\mathrm{Tree}}(Q)$. This reflects the number of additional steps the $c$-DAG must descend to reach a more refined node. 
The {\em (expected) search time overhead}
is defined as the expected level difference over queries drawn from $\mathcal{I}_s$, i.e., $\mathbb{E}_{\mathcal{I}_s}[\mathrm{level}_{\mathrm{c\text{-}DAG}}(Q) - \mathrm{level}_{\mathrm{Tree}}(Q)]$.
This measure is always positive and indicated
deeper (refined) returns for $c$-DAG structures, trading better precision (i.e., smaller number of false positives) for, as we will show, only slightly longer searching paths.

\noindent
\textbf{\em (Competitive) False Positive Ratio.}  
The {\em false positive ratio (FP-ratio)} for a query~$Q$~is defined as: 
$FP(Q) = \frac{\text{size $|\mathcal{D}_v|$ of returned node $v$}}{\text{size of query result}}$.

\noindent
For any query length $s$, we perform a competitive analysis between FP-ratios of 1D-Tree and $c$-DAG as follows. 
First, for any query $Q$ of length $s$, we divide FP-ratio returned on 1D-Tree, denoted $FP_{\text{Tree}}(Q)$, by the one returned on $c$-DAG, denoted $FP_{c\text{-DAG}}(Q)$. Note that, because the length of query $Q$ result is the same in both these ratios' formulas, we have: 

$\frac{FP_{\text{Tree}}(Q)}{FP_{c\text{-DAG}}(Q)}
=
\frac{\text{size $|\mathcal{D}_v|$ of node $v$ returned by the 1D-Tree}}{\text{size $|\mathcal{D}_w|$ of node $w$ returned by the $c$-DAG}}
$ . 
\\
Second, we average the result over all queries of length $s$, or more generally, compute expectation over distribution $\mathcal{I}_s$ : \ $\mathbb{E}_{\mathcal{I}_s} \left(\frac{FP_{\text{Tree}}}{FP_{c\text{-DAG}}}\right)$. We call the resulting ratio an {\em (expected) Competitive FP-ratio}.

This ratio captures how much more extraneous data the 1D-Tree may include in its answer compared to the more finely tuned $c$-DAGs. A larger ratio implies a higher false positive rate for the 1D-Tree, proving that the $c$-DAG provides more efficient and precise query containment by the competitive FP-ratio multiplicative~factor.

\vspace*{-1ex}
\paragraph{Truncated Dyadic Distribution.}
A truncated dyadic distribution is a discrete probability distribution defined over a finite range $\{0, 1, \dots, \kappa\}$, where the probabilities decay exponentially, with base~2, up to a cutoff point $\kappa$, and the final probability at $k = \kappa$ is adjusted to ensure the probabilities sum up to $1$. 
Formally, for a random variable $D \in \{0, 1, \dots, \kappa\}$, the truncated dyadic distribution~satisfies:
$
\mathbb{P}(D = k) \propto 2^{-k} \quad \text{for } 0 \leq k < \kappa 
$,
with an adjusted final term $\mathbb{P}(D = \kappa)$ such that:
$\sum_{k=0}^{\kappa} \mathbb{P}(D = k) = 1$.
This type of distribution naturally arises in hierarchical or dyadic structures, where the likelihood of higher-level differences decays geometrically, but the domain is finite due to structural limits such as tree depth or bounded query length.
A slightly modified distribution, called {\em quasi-dyadic}, occurs in our levels' difference distribution analysis~in~Section~\ref{sec:distribution}.

\paragraph{Assumptions for theoretical analysis}

For the sake of theoretical analysis in Sections~\ref{sec:distribution},~\ref{sec:time difference} and~\ref{sec:false positive ratio}, we consider $N= 2^{n+1}$, for some integer $n\ge 0$, and use dyadic intervals (i.e., of lengths being powers of $2$) for canonical  intervals of nodes. 
We assume that all points are natural numbers; in practice, real-valued inputs can be scaled to integers for analysis and scaled back afterward. 
Range queries are drawn from the uniform distribution $\mathcal{I}_s$ over starting positions $x$ (unless stated otherwise). 
We propose and evaluate generalization to non-uniform distributions, arising from specific datasets,~in~Sec.~\ref{sec:skewed}.

\section{1D-Tree vs. \lowercase{c}-DAG: Level Difference Distribution (LDD)}
\label{sec:distribution}
We study the {\em Level Difference Distribution (LDD)} of the levels returned by SRC-search for a random query $Q$
on two structures over a dataset $\cD \subseteq \cM$ , where the domain $ \cM =[0,N)  $ holds $  N=2^{n+1}  $ points,  
for some integer $n\ge 0$: the 1D-Tree and the $c$-DAG with 
fanout $c=2^\alpha+1$ (for an integer $\alpha\ge 1$). 

We assume uniform dataset $\cD$: $  N  $ points at consecutive integers $  \{0,1,\dots,N-1\}  $. Each node $  v  $ is assigned a canonical interval $  \mathrm{range}_v \subseteq \mathcal{M}  $, a dyadic interval of $  [0,N)  $ length of power of two. $  \mathcal{D}_v $ contains the points whose indices fall inside it, and median splits always land on integers, partitioning $  D_v  $ evenly.

For a fixed query length $s$, we denote
$
\kappa = \big\lfloor \log(N/s)\big\rfloor
$,
which implies
$2^{n-\kappa} < s \le 2^{n-\kappa+1}$.
Thus, level $\kappa$ is the 
largest possible level number
whose canonical interval of length $2^{n-\kappa+1}$ can fully contain~the~query~$Q= [x,x+s)$.

We begin by characterizing the set of levels that SRC-search may return for each structure, see Lemmas~\ref{lem:c-DAG-level} and~\ref{lem:c-DAG-small-s} for the $c$-DAG and Lemma~\ref{lem:1D-Tree-level} for the 1D-Tree.
We use these characterizations to derive a probability mass function for the level difference when the query start $x$ is sampled uniformly from the valid positions $x \in \cM$ such that $Q=[x,x+s) \subseteq \cM$: in Lemma~\ref{lem:ldd-small-s} for ``small'' query length $s$, in Lemma~\ref{lem:ldd-large-s} for ``large'' $s$, with all cases put together in Theorem~\ref{thm:prob-dis}.

\begin{lemma}[Level of Returned Nodes in a $c$-DAG]
\label{lem:c-DAG-level}
Assume a $c$-DAG with $c = 2^\alpha + 1$ for some integer $\alpha \geq 1$, constructed over a dataset of $N = 2^{n+1}$ data points. Consider a range query $Q = [x, x+s)$ with the query length $2^{n-\kappa} < s \leq 2^{n-\kappa+1}$, where $\kappa = \lfloor \log(N/s) \rfloor$. Then SRC-search on the $c$-DAG returns a node at level $\kappa$ if $Q$ is fully contained in a level $\kappa$ node; otherwise, it returns a level $\kappa-1$ node.
In particular, $\mathrm{level}_{\mathrm{c\text{-}DAG}}(Q)\in\{\kappa-1,\kappa\}$ and no other level can be returned.

More precisely, the SRC-search on the $c$-DAG returns a level $\kappa$ node if the query interval $Q$ is fully contained within a single node at level $\kappa$. In other words, the starting point $x$ of the query must fall within the union of the containment intervals of level $\kappa$ nodes, that is:
\[
x \in \bigcup_{m=0}^{(c-1)2^{\kappa} - (c-1)} \left[ m \cdot \frac{2^{n-\kappa+1}}{c-1}, \ (m + c-1) \cdot \frac{2^{n-\kappa+1}}{c-1} - s \right].
\]

We call each such sub-interval in the union a \emph{level $  \kappa  $ containment interval}.

If the query is not fully contained in any level $\kappa$ node, the SRC-search on the $c$-DAG returns a node from level $\kappa-1$ that fully covers the interval $Q$:
\[
x \in \bigcup_{m=0}^{(c-1)2^{\kappa} - c} \left((m + c-1) \cdot \frac{2^{n-\kappa+1}}{c-1} - s, \ (m+1) \cdot \frac{2^{n-\kappa+1}}{c-1} \right)
\ .
\]
\end{lemma}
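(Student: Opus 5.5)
We start by describing explicitly the canonical intervals of the $c$-DAG on the uniform dataset $\cD=\{0,\dots,N-1\}$ with $N=2^{n+1}$ and $c=2^\alpha+1$. By induction on the level $\ell$, using the construction rule, we show that the level-$\ell$ nodes have exactly the ranges
\[
\Bigl[m\cdot \tfrac{2^{n-\ell+1}}{c-1},\ (m+c-1)\cdot\tfrac{2^{n-\ell+1}}{c-1}\Bigr),\qquad m=0,1,\dots,(c-1)2^{\ell}-(c-1).
\]
Thus every level-$\ell$ node has length $2^{n-\ell+1}$, and consecutive starts differ by the step $\delta_\ell=2^{n-\ell+1}/(c-1)$. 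The choice $c-1=2^\alpha$ makes $\delta_\ell$ an integer as long as $\ell\le n+1-\alpha$; for deeper levels we read the formula with the floors from the definition. The inductive step works as follows. A node $[a,a+L)$ has children starting at $a+j\cdot L/(2(c-1))$ for $j=0,\dots,c-1$, each of length $L/2$. The union over all parents at level $\ell$ of these children starts is exactly the grid $\delta_{\ell+1}\mathbb{Z}\cap[0,N-L/2]$, because parent starts lie on the coarser grid $\delta_\ell\mathbb{Z}=2\delta_{\ell+1}\mathbb{Z}$ and children offsets fill in $\delta_{\ell+1}\{0,\dots,c-1\}$. Counting the grid points gives $(c-1)2^{\ell+1}-(c-2)$ nodes, which matches the count stated in the storage paragraph.

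Next comes the level claim. A level-$\ell$ node can contain $Q=[x,x+s)$ only if $s\le 2^{n-\ell+1}$. By the definition of $\kappa$ we have $s>2^{n-\kappa}$, so no node at level $\ge\kappa+1$ contains $Q$. Since the descent from a node only proceeds into children containing $Q$, the returned level is at most $\kappa$.

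For the lower bound we show that some level-$(\kappa-1)$ node always contains $Q$. At that level the node length is $2^{n-\kappa+2}\ge 2s$ and the step is $\delta_{\kappa-1}=2^{n-\kappa+2}/(c-1)\le 2^{n-\kappa+1}$, since $c\ge3$. So the containment intervals of $x$ at level $\kappa-1$ have length $2^{n-\kappa+2}-s\ge 2^{n-\kappa+1}\ge\delta_{\kappa-1}$, which means consecutive ones overlap or abut and jointly cover all valid $x\in[0,N-s]$. We still need that the SRC descent actually reaches such a node. Containment is monotone along edges: the parents of a node containing $Q$ contain $Q$, and because the children of a node cover it in overlapping halves, any node containing $Q$ at level $<\kappa-1$ has a child containing $Q$ (the same overlap argument applied inside one node). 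Hence the descent reaches depth $\ge\kappa-1$, and by the first part it stops at $\kappa$ exactly when some level-$\kappa$ node contains $Q$, and at $\kappa-1$ otherwise. I expect this "every containing node has a containing child down to $\kappa-1$" step to be the main obstacle. The point is that the overlap of adjacent children inside a parent of length $L$ is $L/2-L/(2(c-1))$, and we need a window of length $\ge s$ to slide across; I will verify this with the inequality $L\ge 4s$ for levels $\le\kappa-2$.

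Finally, $Q$ lies in the level-$\kappa$ node indexed by $m$ iff $m\delta_\kappa\le x$ and $x+s\le(m+c-1)\delta_\kappa$. Taking the union over $m$ gives the first displayed set. Its complement within the valid $x$ consists of the gaps between consecutive containment intervals, namely $\bigl((m+c-1)\delta_\kappa-s,\,(m+1)\delta_\kappa\bigr)$ for $m=0,\dots,(c-1)2^\kappa-c$. These gaps are nonempty exactly when $s>(c-2)\delta_\kappa$. This yields the second display, completing the characterization.
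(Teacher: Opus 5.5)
Your proposal is correct and follows the same route as the paper. Interval lengths exclude levels $\ge\kappa+1$, the level-$\kappa$ node of index $m$ gives the containment interval $[m\delta_\kappa,(m+c-1)\delta_\kappa-s]$, the complement in $[0,N-s]$ is the union of the gaps, and level $\kappa-1$ is the fallback. You are more careful than the paper in two places where it only asserts. First, you derive the level-$\ell$ grid and the count $(c-1)2^\ell-(c-2)$ by induction from the construction. Second, you justify the fallback by $\delta_{\kappa-1}\le 2^{n-\kappa+2}-s$. The paper's reason, $s<2^{n+2-\kappa}$, says nothing about alignment and would equally ``prove'' the false analogue for the 1D-Tree.

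One step needs a line more: the claim that the descent stops at $\kappa$ exactly when some level-$\kappa$ node contains $Q$.

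\emph{Exhaustive search.} If SRC explores every child containing $Q$, as Algorithm~\ref{alg:SRC} does, the claim is immediate. Your monotonicity remark shows that every node containing $Q$ is visited, so the $L\ge 4s$ step is not even needed.

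\emph{Single-path search.} If the descent follows one path, or returns an arbitrary node with no containing child as in the paper's informal description, you also need more. You must show that the level-$(\kappa-1)$ node $u$ it reaches has a child containing $Q$ whenever some level-$\kappa$ node does. The $L\ge 4s$ argument cannot give this, because at level $\kappa-1$ only $L\ge 2s$ holds and the children of $u$ may leave gaps. The fix is a clamping argument. If $Q\subseteq u=[2m_u\delta_\kappa,(2m_u+2(c-1))\delta_\kappa)$ and $Q$ lies in the level-$\kappa$ node of index $m$, then $Q$ lies in the child of $u$ of index $\min\{\max\{m,2m_u\},\,2m_u+c-1\}$. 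The same argument works at every level and subsumes your $L\ge 4s$ step. The paper also asserts this point without proof.

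Two small remarks:
\begin{itemize}
\item For $\kappa=0$ there is no level $\kappa-1$, but the root contains $Q$, so the claim is trivial.
\item Instead of ``reading the formula with the floors'' at deep levels, adopt the exact offsets $jL/(2(c-1))$ that the statement presupposes. With floors, the middle children coincide with $w_1$ once $L<2(c-1)$, and the displayed sets no longer describe the construction.
\end{itemize}
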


\begin{proof}
We begin by examining the structure of the  $c$-DAG and the conditions under which a query interval can be fully contained in one of its nodes. By construction of the $c$-DAG where $c = 2^{\alpha} + 1$, all nodes at level $\kappa$ have intervals of length $2^{\,n-\kappa+1}\ge s$ while any deeper level has length less than $s$ and cannot contain $Q$. 
If $Q$ fits inside a level $\kappa$ node, SRC-search returns that inclusion-minimal node.
Furthermore, the total number of distinct nodes at level $\kappa$ is 
\[
(c-1) 2^\kappa -( c-2)
\ .
\]

The $c$-DAG returns a level $\kappa$ node if $Q$ is entirely contained within one of these level $\kappa$ nodes.

To characterize this condition, we identify the precise starting positions $x$ for which the interval $[x, x+s)$ fits completely inside a level $\kappa$ node. If the query start point $x$ lies in a level $\kappa$ node so that $Q$ is fully contained in that node, then $x$ lies in the valid starting range for that node.
For a node indexed by $m$, this feasible range of starting positions is given by
\[
\left[m \cdot \frac{2^{n-\kappa+1}}{c-1}, \ (m + c-1) \cdot \frac{2^{n-\kappa+1}}{c-1} - s\right],
\]
Therefore, the full range of $x$ values that yield a level $\kappa$ return is
\[
x \in \bigcup_{m=0}^{(c-1)2^{\kappa} - (c-1)} \left[ m \cdot \frac{2^{n-\kappa+1}}{c-1}, \ (m + c-1) \cdot \frac{2^{n-\kappa+1}}{c-1} - s \right].
\]

If $Q$ does not fit inside any level $\kappa$ node, then some level $(\kappa-1)$ node of length $2^{n-\kappa+2}>s$ contains $Q$, and SRC-search returns that level; any coarser level would violate minimality. 
The corresponding $x$ values form the regions between successive level $\kappa$ containment intervals, namely
\[
x \in \bigcup_{m=0}^{(c-1)2^{\kappa} - c} \left((m + c-1) \cdot \frac{2^{n-\kappa+1}}{c-1} - s, \ (m+1) \cdot \frac{2^{n-\kappa+1}}{c-1} \right)
\ .
\]
Finally, because $s \le 2^{n+1-\kappa} < 2^{n+2-\kappa}$, every query of length $s$ fits inside some level $(\kappa-1)$ node even in the worst alignment case. Hence, the $c$-DAG may return nodes only from levels $\kappa$ and $\kappa-1$.
\end{proof}

\begin{lemma}
\label{lem:c-DAG-small-s}
For any query length $2^{\,n-\kappa} < s \le \frac{c-2}{c-1}\,2^{\,n-\kappa+1}$, the $c$-DAG, with $c=2^{\alpha}+1$ for some integer $\alpha \geq 1$, returns a node at level $\kappa$ for all valid query start positions.
\end{lemma}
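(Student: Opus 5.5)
The plan is to reduce the claim to Lemma~\ref{lem:c-DAG-level}: show that for $s \le \frac{c-2}{c-1}2^{n-\kappa+1}$ the union of level $\kappa$ containment intervals covers every valid start position $x\in[0,N-s]$, so the complementary ``gap'' set that would force a level $\kappa-1$ return is empty. We write $L=2^{n-\kappa+1}$ for the length of a level $\kappa$ node and $\delta=L/(c-1)$ for the offset between consecutive level $\kappa$ nodes. By Lemma~\ref{lem:c-DAG-level}, the $m$-th containment interval is $[m\delta,\,(m+c-1)\delta-s]$.

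The key step is to check that consecutive containment intervals overlap or abut. Interval $m+1$ starts at $(m+1)\delta$, and interval $m$ ends at $(m+c-1)\delta-s$. There is no gap exactly when $(m+c-1)\delta - s \ge (m+1)\delta$, that is, when $s \le (c-2)\delta = \frac{c-2}{c-1}L$. This is precisely the hypothesis. Equivalently, each gap interval $\big((m+c-1)\delta-s,\,(m+1)\delta\big)$ listed in Lemma~\ref{lem:c-DAG-level} is empty, because its left endpoint is at least its right endpoint.

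Next we verify the boundary conditions, so that the chain of intervals covers the whole range $[0,N-s]$. The first interval ($m=0$) starts at $0$. The last one, $m=(c-1)2^\kappa-(c-1)$, ends at $(c-1)2^\kappa\delta - s = 2^\kappa L - s = 2^{n+1}-s = N-s$. Each interval is nonempty, since $(c-1)\delta - s = L-s\ge 0$ because $s\le L$. Contiguity plus these two endpoints gives coverage of $[0,N-s]$.

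Finally, by Lemma~\ref{lem:c-DAG-level}, every valid $x$ therefore lies in a level $\kappa$ containment interval, so SRC-search returns a level $\kappa$ node. We also need the hypothesis to be non-vacuous: $\frac{c-2}{c-1}L > L/2$ holds since $c\ge3$. The main obstacle is getting the offset $\delta$ and the index range right, in particular confirming from the $c$-DAG construction that the level $\kappa$ nodes are exactly the intervals $[m\delta, m\delta+L)$ for all $m$ in the stated range, including the nodes straddling parent boundaries. I would take this from the node count $(c-1)2^\kappa-(c-2)$ and the containment intervals already established in Lemma~\ref{lem:c-DAG-level}. The choice $c=2^\alpha+1$ ensures that $\delta$ is an integer.
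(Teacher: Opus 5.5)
Your proposal is correct and follows the paper's proof essentially step by step: the gap $L_{m+1}-R_m = s-\frac{c-2}{c-1}2^{n-\kappa+1}$ is non-positive, the chain of containment intervals starts at $0$ and ends at $N-s$, and Lemma~\ref{lem:c-DAG-level} then gives a level $\kappa$ return for every valid start. Two side remarks are slightly off but do not affect the argument. For $c=3$ one has $\frac{c-2}{c-1}L = L/2$ exactly, so the hypothesis range is empty and the lemma holds vacuously, as the paper notes, rather than $\frac{c-2}{c-1}L > L/2$. Also, $\delta = 2^{n-\kappa+1-\alpha}$ need not be an integer when $\alpha > n-\kappa+1$.
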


\begin{proof}
From Lemma~\ref{lem:c-DAG-level}, for any $Q=[x,x+s)$ where $2^{\,n-\kappa} < s \le 2^{\,n-\kappa+1}$, the query is fully contained in a level $\kappa$ node if and only if
\[
x \in \bigcup_{m=0}^{(c-1)2^{\kappa} - (c-1)}
\left[
m \cdot \frac{2^{\,n-\kappa+1}}{c-1},\;
(m + c-1) \cdot \frac{2^{\,n-\kappa+1}}{c-1} - s
\right].
\]
Let us write the left and right  endpoints of interval $m$ as
$L_m = m \cdot \frac{2^{\,n-\kappa+1}}{c-1}$,
$R_m = (m + c-1) \cdot \frac{2^{\,n-\kappa+1}}{c-1} - s$, respectively.
Then,
\[
x \in \bigcup_{m=0}^{(c-1)2^{\kappa} - (c-1)}
\left[
L_m, R_m
\right].
\]
Gaps between these intervals are positions where, if the query starting point falls in the gap, the $c$-DAG returns a node at level $\kappa-1$; these gaps are of length 
\[
L_{m+1} - R_m = ((m+1) - (m+c-1)) \cdot \frac{2^{\,n-\kappa+1}}{c-1} + s = s - (\frac{c-2}{c-1})\,2^{\,n-\kappa+1}
\ .
\]

If $ s \leq (\frac{c-2}{c-1})\,2^{\,n-\kappa+1}$, the gap length is non-positive.
This means that for $ s \leq (\frac{c-2}{c-1})\,2^{\,n-\kappa+1}$, there is no positive gap length between intervals $[L_m,R_m]$.  Hence, the intervals
\[
[L_m,\,R_m],\qquad m=0,1,\dots,(c-1)2^{\kappa}-(c-1),
\]
are contiguous (overlapping or touching), and their union forms a single continuous interval.

Moreover, the first range starts at
$L_0 = 0$,
and the last range~ends~at
\[
R_{(c-1)2^{\kappa}-(c-1)}
= \Bigl((c-1)2^{\kappa}\Bigr)\cdot \frac{2^{\,n-\kappa+1}}{c-1} - s
= 2^{\,n+1} - s \;=\; N - s
\ .
\]
Therefore, the union equals $[0,\,N-s]$, i.e., every valid starting position $x$ lies in one of these level $\kappa$ containment ranges. By Lemma~\ref{lem:c-DAG-level}, this implies that for all such $x$, SRC-search on the $c$-DAG returns a node at level $\kappa$.

For $c = 3$ (i.e., $\alpha = 1$), we have $\frac{c-2}{c-1}\cdot2^{n-\kappa+1} = 2^{n-\kappa}$, so the interval $2^{n-\kappa} < s \le \frac{c-2}{c-1} \cdot 2^{n-\kappa+1}$ is empty and the lemma holds vacuously in this case.
\end{proof}

To facilitate the subsequent analysis of the 1D-Tree, we introduce the following notation. A \emph{primitive boundary point at level $\ell$} is a splitting point that separates the canonical  intervals of the left and right children of a node at level $\ell-1$ (i.e., the median used in the binary partition at that level). Intuitively, these points mark the thresholds at which a query interval can no longer be fully contained within any single node at level $\ell$. We say that a query $Q = [x, x+s)$ \emph{straddles} such a point $p$ if $x < p < x+s$.

\begin{lemma}[Level of Returned Nodes in 1D-Tree]
\label{lem:1D-Tree-level}
Consider a 1D-Tree over a dataset of size $N = 2^{n+1}$. Given a query $Q = [x, x + s)$ such that $2^{\,n-\kappa} < s \le 2^{\,n-\kappa+1}$, and $\kappa = \lfloor \log\frac{N}{s} \rfloor$, 
the SRC-search returns a node at level $\ell \in \{0,\dots,\kappa\}$ satisfying the following:

\textbf{(i)} Node at level $\ell \in \{0,\dots,\kappa-1\}$ is returned if the query $Q$ straddles a level $(\ell+1)$ primitive boundary point but does not straddle any primitive boundary at levels $\le \ell$, i.e.,
\[
x \in \bigl(m \cdot 2^{\,n-\ell} - s,\ m \cdot 2^{\,n-\ell}\bigr)
\quad\text{for odd } m \in \{1,3,\dots,2^{\ell+1}-1\}.
\]

\textbf{(ii)} Node at level $\ell=\kappa$ is returned if $Q$ is fully contained in some level $\kappa$ interval, i.e.,
\[
x \in \bigl[m \cdot 2^{\,n-\kappa+1},\ (m+1)\cdot 2^{\,n-\kappa+1} - s\bigr]
\quad\text{for } m \in \{0,1,\dots,2^{\kappa}-1\}.
\]
\end{lemma}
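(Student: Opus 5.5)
The plan is to make the dyadic structure of the uniform 1D-Tree explicit and then read off the returned level from which primitive boundary points the query straddles.

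\textbf{Step 1 (structure).} Under the uniform-data assumption, the level-$\ell$ nodes have canonical intervals $[j\cdot 2^{n-\ell+1},(j+1)\cdot 2^{n-\ell+1})$ for $j=0,\dots,2^\ell-1$. The level-$\ell$ primitive boundary points are the medians $m\cdot 2^{n-\ell+1}$ of these level-$(\ell-1)$ nodes, with $m$ odd. Rewriting this with the index shifted, the level-$(\ell+1)$ primitive boundaries are the points $m\cdot 2^{n-\ell}$ with $m$ odd in $\{1,3,\dots,2^{\ell+1}-1\}$. By induction on $\ell$, the set of all primitive boundaries at levels $\le \ell$ is exactly $\{j\cdot 2^{n-\ell+1}: 0<j<2^\ell\}$, which is the set of endpoints of level-$\ell$ nodes interior to $\cM$.

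\textbf{Step 2 (containment criterion).} I claim that $Q=[x,x+s)$ is contained in some level-$\ell$ node if and only if $Q$ straddles no primitive boundary at levels $\le \ell$. This follows directly from Step 1: a half-open interval lies in one cell of a partition of $[0,N)$ exactly when no interior cell endpoint $p$ satisfies $x<p<x+s$. The borderline case $x+s=p$ is allowed because the intervals are half-open. Since the 1D-Tree is a tree, the containing nodes form a root path, and SRC-search returns the deepest one, as described in the SRC-search on 1D-Tree paragraph. So the returned level is the largest $\ell$ such that $Q$ straddles no boundary at levels $\le\ell$.

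\textbf{Step 3 (cases).} Because $s>2^{n-\kappa}$, any $Q$ of length $s$ straddles some point of spacing $2^{n-\kappa}$, and hence some boundary at level $\le \kappa+1$. The returned level is therefore at most $\kappa$, matching the fact that deeper nodes are shorter than $s$.

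For case (ii), level $\kappa$ is returned exactly when $Q$ lies in a level-$\kappa$ node $[m2^{n-\kappa+1},(m+1)2^{n-\kappa+1})$. That is equivalent to $x\in[m2^{n-\kappa+1},(m+1)2^{n-\kappa+1}-s]$, using half-openness at the right end.

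For case (i) with $\ell<\kappa$, level $\ell$ is returned exactly when $Q$ straddles a level-$(\ell+1)$ boundary $p=m2^{n-\ell}$ with $m$ odd, and straddles nothing at levels $\le\ell$. Straddling $p$ means $x\in(p-s,p)$. I then need to check that this is consistent with avoiding the coarser boundaries. The nearest coarser boundaries to $p$ are at distance $2^{n-\ell}\ge 2^{n-\kappa+1}\ge s$, so an interval of length $s$ straddling $p$ cannot also straddle them. The condition in (i) therefore reduces to $x\in(m2^{n-\ell}-s,m2^{n-\ell})$ with $m$ odd.

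The main obstacle I expect is the boundary bookkeeping in this last check: open versus closed endpoints when $s=2^{n-\kappa+1}$, and making sure that the straddle intervals for distinct odd $m$ and different levels, together with the containment intervals of (ii), partition $[0,N-s]$ without overlap.
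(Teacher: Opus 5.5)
Your proposal is correct and follows essentially the same route as the paper. Both identify the level-$(\ell+1)$ primitive boundaries with the odd multiples of $2^{n-\ell}$, use that SRC-search on the 1D-Tree returns the deepest node containing $Q$, and note that failing to descend from a level-$\ell$ node means straddling the split point of its two children.

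Your version is more complete than the paper's proof. The paper argues only the direction from ``level $\ell$ is returned'' to the stated condition, and it leaves implicit both the bound $\ell\le\kappa$ and the fact, which your distance check $2^{n-\ell}\ge 2^{n-\kappa+1}\ge s$ establishes, that for $\ell<\kappa$ straddling an odd multiple of $2^{n-\ell}$ automatically avoids all coarser boundaries. The endpoint issue you anticipate is already settled by the strict inequalities in the definition of straddling, and the disjointness bookkeeping is not needed for this lemma itself.
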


\begin{proof}
Fix $N=2^{n+1}$ and a query $Q=[x,x+s)$ with $2^{\,n-\kappa}<s\le 2^{\,n-\kappa+1}$ and $\kappa=\lfloor\log\frac{N}{s}\rfloor$. 
At level $\ell$, there are $2^\ell$ nodes, each with interval length $2^{\,n-\ell+1}$ and the primitive boundary points of level $\ell$ occur at integer multiples of $2^{\,n-\ell+1}$. 
Thus the primitive boundaries of level $\ell+1$ occur at integer multiples of $2^{\,n-\ell}$, and the primitive level $(\ell+1)$ boundaries (those not coinciding with any coarser boundary) are exactly the odd multiples of $2^{\,n-\ell}$.

\medskip
\noindent\textbf{(i) Case $\ell\in\{0,\dots,\kappa-1\}$.}

Suppose SRC-search returns a level $\ell$ node. Then $Q$ is fully contained in some level $\ell$ interval $[m'\cdot 2^{\,n-\ell+1},(m'+1)\cdot 2^{\,n-\ell+1})$, and it is not contained in either of its two level $(\ell+1)$ children (otherwise a deeper level would be returned). Hence $Q$ intersects both children and therefore crosses their common primitive boundary, which is an odd multiple of $2^{\,n-\ell}$; equivalently,
\[
x \in \bigl(m \cdot 2^{\,n-\ell} - s,\; m\cdot 2^{\,n-\ell}\bigr)
\quad\text{for odd } m \in \{1,3,\dots,2^{\ell+1}-1\}.
\]
Since $Q$ lies inside the level $\ell$ interval, it does not cross any primitive boundary point at levels $\le \ell$.

\medskip
\noindent \textbf{(ii) Case $\ell=\kappa$.}

If SRC-search returns level $\kappa$, then $Q$ is fully contained in some level $\kappa$ interval $[m\cdot 2^{\,n-\kappa+1},(m+1)\cdot 2^{\,n-\kappa+1})$ with $m\in\{0,1,\dots,2^{\kappa}-1\}$. This is equivalent to
$x \in \bigl[m \cdot 2^{\,n-\kappa+1},\ (m+1)\cdot 2^{\,n-\kappa+1} - s\bigr]$.
\end{proof}

\begin{lemma}
\label{lem:ldd-small-s}
Assume a $c$-DAG with $c = 2^{\alpha} + 1$ for some integer $\alpha \geq 1$, and a 1D-Tree constructed over a dataset of size $N = 2^{n+1}$, and let $\kappa=\lfloor\log\frac{N}{s}\rfloor$.  
For any random query $Q = [x,x+s)$ of length $s$ such that $2^{n-\kappa} < s \le \frac{c-2}{c-1}\,2^{n-\kappa+1}$, for $k \in \{0,1,\dots,\kappa\}$, 
we have
\[
\mathbb{P}_{\mathcal{I}_s}\left(\mathrm{level}_{\mathrm{c\text{-}DAG}}(Q)\;-\;\mathrm{level}_{\mathrm{Tree}}(Q) = k\right)
=
\begin{cases}
    \displaystyle \frac{2^{\kappa}\cdot(2^{n-\kappa+1}-s)}{N-s} = \frac{N - 2^\kappa \cdot s}{N-s} & \text{for } k=0 \\[10pt]
    \displaystyle \frac{2^{\kappa-k}\cdot s}{\,N-s\,} & \text{for } k \in \{1,\dots,\kappa\}
\end{cases}
\]
For all other integers $k$, this probability is zero.
\end{lemma}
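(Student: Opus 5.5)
By Lemma~\ref{lem:c-DAG-small-s}, in this range of $s$ the $c$-DAG returns a level $\kappa$ node for every valid start $x\in[0,N-s]$. So the level difference equals $\kappa-\mathrm{level}_{\mathrm{Tree}}(Q)$, and the event $\{\mathrm{level}_{\mathrm{c\text{-}DAG}}(Q)-\mathrm{level}_{\mathrm{Tree}}(Q)=k\}$ coincides with $\{\mathrm{level}_{\mathrm{Tree}}(Q)=\kappa-k\}$. The whole proof therefore reduces to computing the distribution of the 1D-Tree return level under the uniform distribution $\mathcal{I}_s$ on starts. I will treat this as a continuous uniform measure on $[0,N-s]$, of total length $N-s$. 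In the integer model the same counts arise up to the boundary convention, and I will fix a convention consistent with the closed/open endpoints in Lemma~\ref{lem:1D-Tree-level}. The probability is then the total length of the corresponding $x$-set divided by $N-s$. Levels outside $\{0,\dots,\kappa\}$ are impossible by Lemma~\ref{lem:1D-Tree-level}, so $k\notin\{0,\dots,\kappa\}$ has probability zero.

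For $k=0$, i.e.\ tree level $\kappa$, I will use Lemma~\ref{lem:1D-Tree-level}(ii). The $x$-set is the union over $m=0,\dots,2^\kappa-1$ of $[m2^{n-\kappa+1},(m+1)2^{n-\kappa+1}-s]$. These intervals are pairwise disjoint because each lies inside its own dyadic block, and each has length $2^{n-\kappa+1}-s\ge 0$. The total length is $2^\kappa(2^{n-\kappa+1}-s)=N-2^\kappa s$, which gives the first case.

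For $k\ge1$, i.e.\ tree level $\ell=\kappa-k\in\{0,\dots,\kappa-1\}$, I will use Lemma~\ref{lem:1D-Tree-level}(i). The $x$-set is the union over the $2^\ell$ odd $m$ in $\{1,\dots,2^{\ell+1}-1\}$ of $(m2^{n-\ell}-s,\,m2^{n-\ell})$, each of length $s$. The total length would then be $2^{\ell}s=2^{\kappa-k}s$, giving the claimed formula.

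The main obstacle is justifying that these windows are genuinely disjoint, lie within $[0,N-s]$, and are not double-counted against other levels. To handle this, I will check that consecutive odd multiples of $2^{n-\ell}$ are $2^{n-\ell+1}\ge 2^{n-\kappa+2}>s$ apart, since $\ell\le\kappa-1$ and $s\le2^{n-\kappa+1}$. I will also check that each window lies in $[0,N-s]$: the smallest left end is $2^{n-\ell}-s>0$, and the largest right end is $N-2^{n-\ell}\le N-s$. Finally, I must show that a query straddling a level-$(\ell+1)$ primitive boundary straddles no coarser boundary. The nearest coarser boundary is at distance at least $2^{n-\ell}\ge 2^{n-\kappa+1}\ge s$ from it, so the window of length $s$ cannot reach it. This makes the level sets for different $\ell$ disjoint, so Lemma~\ref{lem:1D-Tree-level}(i) is an exact characterization.

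As a sanity check I will verify that the probabilities sum to one. Using $\sum_{\ell=0}^{\kappa-1}2^\ell=2^\kappa-1$, the total is
\[
\frac{N-2^\kappa s+(2^\kappa-1)s}{N-s}=1,
\]
confirming that the level events partition $[0,N-s]$ and that no boundary subtlety was lost.
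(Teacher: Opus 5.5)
Your proposal is correct and follows the paper's proof essentially step for step. Like the paper, you use Lemma~\ref{lem:c-DAG-small-s} to reduce to the 1D-Tree return level, measure the $2^\kappa$ containment intervals for $k=0$ and the $2^{\kappa-k}$ straddle windows of length $s$ for $k\ge 1$, and check that the masses sum to one. Your explicit spacing argument (boundaries at levels $\le\kappa$ are at least $2^{n-\kappa+1}\ge s$ apart, so no query straddles two of them) makes rigorous the disjointness and the dropping of the ``no coarser straddle'' condition, both of which the paper only asserts.
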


\begin{proof}
By Lemma~\ref{lem:c-DAG-small-s}, when $2^{\,n-\kappa} < s \le \frac{c-2}{c-1}\,2^{\,n-\kappa+1}$ the $c$-DAG always returns a node at level $\kappa$. 
Hence, for each query start point $x\in [0,N-s]$,
drawn uniformly at random, the level difference equals $k$ if the 1D-Tree returns a node at level $\kappa-k$.

\smallskip
\noindent\textbf{Case $k=0$.} SRC-search on the 1D-Tree returns a node at level $\kappa$ if the query is fully contained in some level $\kappa$ interval. 
Each level $\kappa$ interval has length $2^{\,n-\kappa+1}$, and by Lemma~\ref{lem:1D-Tree-level}, 
\[
x \in \bigl[m \cdot 2^{\,n-\kappa+1},\ (m+1)\cdot 2^{\,n-\kappa+1} - s\bigr]
\quad\text{for } m \in \{0,1,\dots,2^{\kappa}-1\}
\ .
\]
Each such interval of points $x$ has length $2^{\,n-\kappa+1}-s$, and there are $2^\kappa$ such intervals. 
Thus, the total length of the intervals containing query starting positions $x$, leading to $k=0$, is $2^{\kappa}\cdot (2^{\,n-\kappa+1}-s)$,~yielding the following probability
\[
\mathbb{P}(k=0)=\frac{2^{\kappa} \cdot (2^{\,n-\kappa+1}-s)}{N-s}= \frac{N - 2^\kappa \cdot s}{N-s}
\ .
\]

\smallskip
\noindent\textbf{Case $k\in\{1,\dots,\kappa\}$.}
By Lemma~\ref{lem:1D-Tree-level}, the 1D-Tree returns level $\kappa-k$ if the query $Q$ straddles at least one primitive boundary point at level $\kappa-k+1$, i.e., 
\[
x \in (m\cdot 2^{\,n-\kappa+k}-s, m\cdot 2^{\,n-\kappa+k})
\quad\text{with odd } m \in \{1,3,\dots,2^{\kappa-k+1}-1\}
\]
and the query does not straddle any primitive boundary point at levels $\le \kappa-k$. 
Each such boundary interval of points $x$ has length $s$, there are $2^{\,\kappa-k}$ such intervals and they are pairwise disjoint. Hence, the total length of the intervals containing query starting positions $x$, leading to $k\in\{1,\dots,\kappa\}$, is $2^{\,\kappa-k}\cdot s$. Dividing by 
the length of the interval of all possible query starting positions, which is $N-s$,
we get the probability
$\mathbb{P}(k)=\frac{2^{\,\kappa-k}\cdot s}{N-s}$.

It remains to prove that the probabilities for other values of $k$ are~$0$. In order to do it, we show that the probabilities for the considered values of $k\in \{0,1,\ldots,\kappa\}$ sum up to $1$, or equivalently, the sum of their nominators is equal to the denominator $N-s$.~Indeed,
\[
2^{\kappa}\bigl(2^{n-\kappa+1}-s\bigr) + \sum_{k=1}^{\kappa} 2^{\kappa-k}s
= (N - 2^{\kappa}s) + s(2^{\kappa}-1)
= N - s
\ .
\]
\end{proof}

We now present the final characterization of the level difference distribution. Recall that the variable $k$ denotes the level difference, and that its maximum possible value is $\kappa = \lfloor \log\frac{N}{s} \rfloor$.

\begin{lemma}
\label{lem:ldd-large-s}
    Let $Q = [x, x + s)$ be a random query of length $\frac{c-2}{c-1}\,2^{\,n-\kappa+1} < s \le 2^{\,n-\kappa+1}$, where $\kappa = \big\lfloor \log \frac{N}{s} \big\rfloor$ and $c = 2^{\alpha} + 1$ for some integer $\alpha \geq 1$. Then the distribution of the level difference between the $c$-DAG and the 1D-Tree is:
    \[    
\mathbb{P}_{\mathcal{I}_s}\left(\text{level}_{\text{$c$-DAG}}(Q) - \text{level}_{\text{Tree}}(Q) = k\right)
    = 
    \begin{cases}
        \displaystyle \frac{-(c-4)\,2^{\,n} + (c-3)\,2^{\,\kappa-1}\,s}{\,N-s\,} & \text{for } k=0\\[10pt] 
        \displaystyle \frac{(c-2)\,2^{\,n-k} \;-\; (c-3)\,2^{\,\kappa-k-1}\,s}{\,N-s\,} & \text{for }  k \in \{1,\dots,\kappa-1 \} \\[10pt]
        \displaystyle \frac{(c-2) \cdot 2^{\,n-\kappa+1} - (c-2)s}{N-s} & \text{for } k=\kappa
    \end{cases}
    \]
    and the probability is zero for all other integers $k$.
\end{lemma}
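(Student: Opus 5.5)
The plan is to compute the joint distribution of the two returned levels by partitioning the start positions $x\in[0,N-s]$ according to which level the 1D-Tree returns, and then splitting each part according to whether the $c$-DAG returns level $\kappa$ or $\kappa-1$. Write $W=2^{n-\kappa+1}$ and $g=W/(c-1)$. By Lemma~\ref{lem:c-DAG-level}, the $c$-DAG returns level $\kappa-1$ exactly on the gaps
\[
G_m=\bigl((m+c-1)g-s,\ (m+1)g\bigr),\qquad m=0,\dots,(c-1)2^{\kappa}-c,
\]
and level $\kappa$ elsewhere. In the present regime $s>\frac{c-2}{c-1}W$, so each gap has positive length $s-(c-2)g$; this is the computation from the proof of Lemma~\ref{lem:c-DAG-small-s}.

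First I handle the region where the 1D-Tree returns level $\kappa$, described in Lemma~\ref{lem:1D-Tree-level}(ii). There $Q$ lies inside a dyadic level-$\kappa$ interval, and that interval is also a level-$\kappa$ node of the $c$-DAG (it is one of the children $w_1$ or $w_c$). Hence the $c$-DAG also returns level $\kappa$ and $k=0$. This region has measure $2^{\kappa}(W-s)$.

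Every other $x$ lies in a straddle window $(p-s,p)$, where $p=tW$ with $1\le t\le 2^{\kappa}-1$. These windows are pairwise disjoint because $s\le W$. If $p$ is a primitive boundary of level $\kappa-j+1$, the tree returns level $\kappa-j$, so $k=j$ when the $c$-DAG returns $\kappa$ and $k=j-1$ when it returns $\kappa-1$. By Lemma~\ref{lem:1D-Tree-level}(i) there are $2^{\kappa-j}$ such windows for each $j\in\{1,\dots,\kappa\}$.

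The key step, which I expect to be the main obstacle, is to show that each window $(p-s,p)$ with $p=t(c-1)g$ contains exactly $c-1$ whole gaps and meets no others. These are the gaps $G_m$ with $m+1=t(c-1)-r$ for $r=0,\dots,c-2$. For such $m$ the left endpoint is $(t(c-1)+c-2-r)g-s\ge p-s$ and the right endpoint is $(t(c-1)-r)g\le p$, so $G_m\subseteq(p-s,p)$. As a consistency check, counting over all $2^{\kappa}-1$ windows gives $(c-1)(2^{\kappa}-1)$ gaps, which is exactly the total number of gaps; so the gaps are partitioned among the windows, and none fall in the tree's level-$\kappa$ region. Consequently, inside each window:
\begin{itemize}
\item the gap part (where $k=j-1$) has measure $(c-1)s-(c-2)W$;
\item the remaining part (where $k=j$) has measure $s-\bigl((c-1)s-(c-2)W\bigr)=(c-2)(W-s)$.
\end{itemize}

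Finally I collect contributions and divide by $N-s$.
\begin{itemize}
\item For $k=0$: $2^{\kappa}(W-s)+2^{\kappa-1}\bigl((c-1)s-(c-2)W\bigr)$.
\item For $1\le k\le\kappa-1$: $2^{\kappa-k}(c-2)(W-s)+2^{\kappa-k-1}\bigl((c-1)s-(c-2)W\bigr)$. Using $2^{\kappa-k}W=2^{n-k+1}$, this simplifies to $(c-2)2^{n-k}-(c-3)2^{\kappa-k-1}s$.
\item For $k=\kappa$: only the single window with $j=\kappa$ contributes its non-gap part, giving $(c-2)(W-s)$.
\end{itemize}
The $k=0$ expression simplifies to $-(c-4)2^n+(c-3)2^{\kappa-1}s$. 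As in Lemma~\ref{lem:ldd-small-s}, the vanishing of all other probabilities follows because the regions above exhaust $[0,N-s]$; one can also confirm that the numerators sum to $N-s$.
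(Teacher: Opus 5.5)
Your proposal is correct and follows essentially the same route as the paper. Both split the start positions into the 1D-Tree's level-$\kappa$ region, where the $c$-DAG also returns level $\kappa$, and the $2^{\kappa-j}$ straddle windows of the level-$(\kappa-j+1)$ primitive boundaries. Inside each window, both separate the $c-1$ gaps, of total length $(c-1)s-(c-2)2^{n-\kappa+1}$ and contributing to $k=j-1$, from the remaining length $(c-2)(2^{n-\kappa+1}-s)$, which contributes to $k=j$.

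The differences are only organizational. You obtain the non-gap part as a complement and pin down the gap-to-window assignment by explicit index arithmetic. The paper instead counts the $c-2$ containment intervals per window directly, and gets the vanishing of the other $k$ from the numerators summing to $N-s$. Like the paper, your bookkeeping tacitly assumes $\kappa\ge 1$: for $\kappa=0$ the stated cases $k=0$ and $k=\kappa$ coincide.
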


\begin{proof} Fix query length $  s  $ with $  \frac{c-2}{c-1}\,2^{n-\kappa+1} < s \le 2^{n-\kappa+1} , Q=[x,x+s)] $.
The query starting point $  x  $ is chosen uniformly from $  [0, N-s)  $.
Here $  N-s  $ is the full span of valid starting positions— anything longer, the query would overflow the dataset.
For each possible level difference $  k  $, the probability is the total length of all $  x  $ that lead to that $  k  $, divided by $  N-s  $.

\textbf{Case $k=0$.}
Level difference $k=0$ occurs when both $ c $-DAG and 1D-Tree return nodes at the same level.
There are two possibilities:
\begin{enumerate}
    \item SRC-search on both data structures returns a node at level~$\kappa$.
    \item SRC-search on both data structures returns a node at level $\kappa-1$.
\end{enumerate}

\noindent
\textbf{First,} By Lemma~\ref{lem:c-DAG-level}, the $c$-DAG returns a node at level $\kappa$ only when the query of length $s$ is fully contained within a node of level $\kappa$. The starting query $  x  $ falls in a level $  \kappa  $ containment interval.

\[
x \in \bigcup_{m=0}^{(c-1)2^{\kappa} - (c-1)} \left[ m \cdot \frac{2^{n-\kappa+1}}{c-1}, \ (m + c-1) \cdot \frac{2^{n-\kappa+1}}{c-1} - s \right],
\]

by Lemma \ref{lem:1D-Tree-level}, the 1D-Tree returns a node at level $\kappa$ when
\[
x \in \bigl[m \cdot 2^{\,n-\kappa+1},\ (m+1)\cdot 2^{\,n-\kappa+1} - s\bigr]
\quad\text{for } m \in \{0,1,\dots,2^{\kappa}-1\}.
\]

Both data structures returned intervals with length $2^{\,n-\kappa+1}-s$. 
By the structure of the two data structures at level $\kappa$, every starting position $x$ that makes $Q$ fully contained in a level $\kappa$ 1D-Tree interval also lies in a containment interval for some level $\kappa$ $c$-DAG node. Thus, the total length of the intervals containing query starting positions $x$ where both structures return level $\kappa$ is $2^{\kappa}(2^{n-\kappa+1}-s)$, and hence
\[
\mathbb{P}_{\mathcal{I}_s}\left(\text{level}_{\text{$c$-DAG}}(Q) = \kappa \ \& \  \text{level}_{\text{Tree}} (Q)= \kappa \right) = 
\frac{2^{\kappa} \cdot \bigl(2^{\,n-\kappa+1}-s\bigr)}{N-s}
\ .
\]

\noindent
\textbf{Second,}
by Lemma~\ref{lem:1D-Tree-level}, the 1D-Tree returns nodes at level $\kappa-1$ if the query $Q$ straddles the primitive boundary points at level $\kappa$, which is when
\[
x \in \bigl(m \cdot 2^{n-\kappa+1} - s,\; m \cdot 2^{n-\kappa+1}\bigr)~ \text{for odd numbers } m = 1,\dots,2^{\kappa}-1.
\]
Each such interval of points $x$ has length $s$, there are $2^{\kappa-1}$ such intervals and they are pairwise disjoint.

By Lemma~\ref{lem:c-DAG-level}, the $c$-DAG returns a node at level $\kappa-1$ only when the query of length $s$ is not fully contained within a region of size $2^{n+1-\kappa}$. This implies:
\[
x \in \bigcup_{m=0}^{(c-1)2^{\kappa} - c} \left((m + c-1) \cdot \frac{2^{n-\kappa+1}}{c-1} - s, \ (m+1) \cdot \frac{2^{n-\kappa+1}}{c-1} \right)
\ .
\]
Each interval of points $x$ in this union has length $- \frac{c-2}{c-1} \cdot 2^{n-\kappa+1} +s$ (note that this length is positive because we considered $\frac{c-2}{c-1}\,2^{\,n-\kappa+1} < s$), and by the structure of the $c$-DAG around each primitive level $\kappa$ boundary, 
there are exactly $(c-1)$ such intervals associated with each boundary.
Since there are $2^{\kappa-1}$ level $\kappa$ primitive boundaries, 
the total length of the intervals containing query starting positions $x$ where both structures return level $\kappa-1$ is $2^{\kappa-1}(c-1)\Bigl(s - \frac{c-2}{c-1}\,2^{n-\kappa+1}\Bigr)$, and therefore
\[
\hspace*{-5em}
\mathbb{P}_{\mathcal{I}_s}\left(\text{level}_{\text{$c$-DAG}}(Q) = \kappa-1 \ \& \  \text{level}_{\text{Tree}} (Q)= \kappa-1 \right) =  \frac{2^{\kappa-1} \cdot(c-1) \cdot \bigl(- \frac{c-2}{c-1} \cdot 2^{n-\kappa+1} +s\bigr)}{N-s}
\ .
\]

Combining the two probabilities obtained in  parts ``First'' and ``Second'', we get 
\begin{align*}
 \hspace*{-1em}
\mathbb{P}_{\mathcal{I}_s}\left(\text{level}_{\text{$c$-DAG}}(Q) - \text{level}_{\text{Tree}}(Q) = 0\right) 
& = \mathbb{P}_{\mathcal{I}_s}\left(\text{level}_{\text{$c$-DAG}}(Q) = \kappa \ \& \  \text{level}_{\text{Tree}} (Q)= \kappa \right)\\
    & \ \ \ \ + \mathbb{P}_{\mathcal{I}_s}\left(\text{level}_{\text{$c$-DAG}}(Q) = \kappa-1 \ \& \  \text{level}_{\text{Tree}}(Q) = \kappa-1 \right)\\
    & = \frac{2^{\kappa} \cdot \bigl(2^{\,n-\kappa+1}-s\bigr)}{N-s} +
    \frac{2^{\kappa-1} \cdot(c-1) \cdot \bigl(- \frac{c-2}{c-1} \cdot 2^{n-\kappa+1} +s\bigr)}{N-s}
    \\
    & = \frac{-(c-4)\,2^{\,n} + (c-3)\,2^{\,\kappa-1}\,s}{\,N-s\,}
    \ .
\end{align*}

\noindent
\textbf{Case $1 \leq k \leq \kappa - 1$.}
Level difference $k \in \{1,\dots,\kappa-1\}$ arises in two complementary scenarios:
\begin{enumerate}
    \item SRC-search on the $c$-DAG returns a node at level $\kappa$, while on the 1D-Tree it returns a node at level $\kappa-k$.
    \item SRC-search on the $c$-DAG returns a node at level $\kappa-1$, while on the 1D-Tree it  returns a node at level $\kappa-k-1$.
\end{enumerate}

\noindent
\textbf{First,} 
By Lemma~\ref{lem:c-DAG-level}, the $c$-DAG returns a node at level $\kappa$ only when The starting query $  x  $ falls in a level $  \kappa  $ containment interval:
\[
x \in \bigcup_{m=0}^{(c-1)2^{\kappa} - (c-1)} \left[ m \cdot \frac{2^{n-\kappa+1}}{c-1}, \ (m + c-1) \cdot \frac{2^{n-\kappa+1}}{c-1} - s \right]
\ ,
\]
where each interval in this union has length $2^{n-\kappa+1} - s$.

by Lemma~\ref{lem:1D-Tree-level}, the 1D-Tree returns a node at level $\kappa-k$ if the query $Q$ straddles a primitive boundary points at level $\kappa-k+1$. The corresponding starting positions are
\[
x \in \bigl(m \cdot 2^{n-\kappa+k} - s,\; m \cdot 2^{n-\kappa+k}\bigr)~ \text{for odd } m \in \{ 1,\dots,2^{\kappa-k+1}-1 \}.
\]

That gives $2^{\kappa - k}$ disjoint intervals, each of length $  s  $.
For each such boundary point $  b  $, exactly $  (c-2)  $ level $  \kappa  $ containment intervals intersect the open segment $  (b - s, b)  $ in their interior: the half-open convention blocks the extreme ones, one touches $  b-s  $ from the left, another ends at $  b  $, which stays excluded.
Each overlapping interval contributes its full length $2^{n-\kappa+1} - s$.
Therefore, the total length of the intervals containing query starting positions $x$ for which the $  c  $-DAG returns level $  \kappa  $ and the 1D-Tree returns level $  \kappa-k  $ is $2^{\kappa-k}(c-2)\bigl(2^{n-\kappa+1} - s\bigr)$. 

Hence, the corresponding probability is:
\begin{align*}
    \mathbb{P}_{\mathcal{I}_s}\left(\text{level}_{\text{$c$-DAG}}(Q) = \kappa \ \& \  \text{level}_{\text{Tree}}(Q) = \kappa-k \right) = 
    & \frac{2^{\kappa-k}(c-2) \bigl(2^{\,n-\kappa+1}-s\bigr)}{N-s}= \frac{(c-2)2^{n-k+1}- 2^{\kappa-k}(c-2) s}{N-s}
\ .
\end{align*}

\noindent
\textbf{Second,}
similarly, Lemma~\ref{lem:1D-Tree-level} implies that the 1D-Tree returns a node at level $\kappa-k-1$ if the query $Q$ straddles the boundary points at level $\kappa-k$, which is when
\[
x \in \bigl(m \cdot 2^{n-\kappa+k+1} - s,\; m \cdot 2^{n-\kappa+k+1}\bigr)~ \text{for odd numbers } m = 1,\dots,2^{\kappa-k}-1
\ .
\]
Each such interval of points $x$ has length $s$, there are $2^{\,\kappa-k-1}$ such intervals and they are pairwise disjoint.

By Lemma~\ref{lem:c-DAG-level}, the $c$-DAG returns a node at level $\kappa-1$ only when the query of length $s$ is not fully contained within a region of size $2^{n+1-\kappa}$. This implies:
\[
x \in \bigcup_{m=0}^{(c-1)2^{\kappa} - c} \left((m + c-1) \cdot \frac{2^{n-\kappa+1}}{c-1} - s, \ (m+1) \cdot \frac{2^{n-\kappa+1}}{c-1} \right)
\ ,
\]
and each interval of points $x$ in this union has length $- \frac{c-2}{c-1} \cdot 2^{n-\kappa+1} +s$. 
Around each primitive level $(\kappa-k)$ boundary, 
there are exactly $(c-1)$ such
level $(\kappa-1)$ containment intervals that intersect the corresponding straddle interval of the 1D-Tree. Thus the total length of the intervals containing query starting positions $x$ where the $c$-DAG returns level $\kappa-1$ and the 1D-Tree returns level $\kappa-k-1$ is $2^{\kappa-k-1} \cdot (c-1)\cdot (- \frac{c-2}{c-1} \cdot 2^{n-\kappa+1} +s)$.
Then, the probability is
\begin{align*}
    &
    \hspace*{-2em} \mathbb{P}_{\mathcal{I}_s}\left(\text{level}_{\text{$c$-DAG}}(Q) = \kappa-1 \ \& \  \text{level}_{\text{Tree}}(Q) = \kappa-k-1 \right) = \frac{2^{\kappa-k-1} \cdot(c-1) \cdot \bigl(- \frac{c-2}{c-1} \cdot 2^{n-\kappa+1} +s\bigr)}{N-s}\\[10pt]
&= \frac{-(c-2)2^{n-k} + 2^{\kappa-k-1} \cdot(c-1) \cdot s}{N-s}
\ .
\end{align*}

Combining the two probabilities obtained in  parts ``First'' and ``Second'' for $k \in \{1,\dots,\kappa-1\}$, we get 
\begin{align*}   &\mathbb{P}_{\mathcal{I}_s}\left(\text{level}_{\text{$c$-DAG}}(Q) - \text{level}_{\text{Tree}}(Q) = k\right) =
 \mathbb{P}_{\mathcal{I}_s}\left(\text{level}_{\text{$c$-DAG}}(Q) = \kappa \ \& \  \text{level}_{\text{Tree}} (Q) = \kappa-k \right)\\ 
 & \ \ \ \ +\mathbb{P}_{\mathcal{I}_s}\left(\text{level}_{\text{$c$-DAG}}(Q) = \kappa-1 \ \& \  \text{level}_{\text{Tree}} (Q) = \kappa-k-1 \right)
    \\
    & = \frac{2^{\kappa-k} (c-2) \cdot \bigl(2^{\,n-\kappa+1}-s\bigr)}{N-s} +
    \frac{2^{\kappa-k-1} (c-1) \cdot \bigl(- \frac{c-2}{c-1} \cdot 2^{n-\kappa+1} +s\bigr)}{N-s}
    \\
    &= \frac{(c-2)2^{n-k+1}- 2^{\kappa-k} (c-2) s}{N-s} + \frac{2^{\kappa-k-1}(c-1) s -(c-2)2^{n-k}}{N-s}\\
    &= \frac{(c-2)\,2^{\,n-k} \;-\; (c-3)\,2^{\,\kappa-k-1}\,s}{\,N-s\,}
    \ .
\end{align*}

\noindent
\textbf{Case $k = \kappa$.}
The level difference $k = \kappa$ occurs when the 1D-Tree returns the root node (level $0$) and the $  c  $-DAG returns a node at level $  \kappa  $.
By Lemma \ref{lem:1D-Tree-level} the 1D-Tree returns the root if the query straddles the primitive level 1 boundary at $2^n$, i.e., the query starting position satisfies $  x \in (2^n -s, 2^n)  $.

By Lemma~\ref{lem:c-DAG-level}, the $c$-DAG returns a node at level $\kappa$ only when The starting query $  x  $ falls in a level $  \kappa  $ containment interval:
  
\[
x \in \bigcup_{m=0}^{(c-1)2^{\kappa} - (c-1)} \left[ m \cdot \frac{2^{n-\kappa+1}}{c-1}, \ (m + c-1) \cdot \frac{2^{n-\kappa+1}}{c-1} - s \right]
\ .
\]
where each interval in this union has length $2^{n-\kappa+1} - s$.

In the neighborhood of the boundary point $2^n$, exactly $  (c-2)  $ of these level $  \kappa  $ containment intervals lie strictly inside the interval $  (2^n - s, 2^n)  $. This follows from the half-open interval convention ([inclusive start, exclusive end)). 
Each of these $(c-2)$ intervals has length $2^{n-\kappa+1} - s$
starting positions. Hence the total length of the intervals containing query starting positions $x$ for which
$\mathrm{level}_{\mathrm{c\text{-}DAG}}(Q) = \kappa$ and
$\mathrm{level}_{\mathrm{Tree}}(Q) = 0$ is
$(c-2) \cdot \bigl(2^{\,n-\kappa+1}-s\bigr)$. 

Dividing by the length $  N - s  $ of the interval of all query starting points yields
\begin{align*}
    &\mathbb{P}_{\mathcal{I}_s}\!\left(\text{level}_{\text{$c$-DAG}}(Q) -\text{level}_{\text{Tree}} (Q)=\kappa\right)
=
\mathbb{P}_{\mathcal{I}_s}\left(\text{level}_{\text{$c$-DAG}}(Q) = \kappa \ \& \  \text{level}_{\text{Tree}}(Q) = 0 \right)
= \frac{(c-2) \bigl(2^{\,n-\kappa+1}-s\bigr)}{N-s}
\ .
\end{align*}
Putting the three cases together: $k=0$, $1 \le k \le \kappa-1$, and $k=\kappa$, we get the
distribution stated in the lemma.

It remains to prove that the probabilities for other values of $k$ are~$0$. In order to do it, we show that the probabilities for the considered values of $k\in \{0,1,\ldots,\kappa\}$ sum up to $1$, or equivalently, the sum of their nominators is equal to the denominator $N-s$.~Indeed,
\begin{align*}
    &-(c-4)\,2^{n} + (c-3)\,2^{\kappa-1}s
     + \sum_{k=1}^{\kappa-1} \Bigl((c-2)\,2^{n-k} - (c-3)\,2^{\kappa-k-1}s\Bigr)\ + (c-2)\,2^{n-\kappa+1} - (c-2)s \\
    &= -(c-4)\,2^{n} + (c-2)\sum_{k=1}^{\kappa-1}2^{n-k} + (c-2)\,2^{n-\kappa+1} + (c-3)\,2^{\kappa-1}s - (c-3)\sum_{k=1}^{\kappa-1}2^{\kappa-k-1}s - (c-2)s \\
    &=-(c-4)\,2^{n} + (c-2)\,2^{n-\kappa+1}(2^{\kappa-1}-1) + (c-2)\,2^{n-\kappa+1} +(c-3)\,2^{\kappa-1}s - (c-3)(2^{\kappa-1}-1)s - (c-2)s \\
    &= \bigl(-(c-4)\,2^{n} + (c-2)\,2^{n}\bigr) + \bigl((c-3)-(c-2)\bigr)s = 2^{n+1} - s
    = N - s
    \ .
\end{align*}

\end{proof}

\begin{theorem}[LDD] 
\label{thm:prob-dis}
For a random query $Q$ of length $s=1$, the level difference $k$ is 0. For $1 < s \leq N$, the probability that the SRC-search on 1D-Tree returns a node with level difference $k$ to a node returned on the $c$-DAG, is given by the following quasi-dyadic distribution:

For $2^{\,n-\kappa} < s \le \frac{c-2}{c-1}\,2^{\,n-\kappa+1}$, by Lemma \ref{lem:ldd-small-s}
\begin{align*}
    & \mathbb{P}_{\mathcal{I}_s}\left(\mathrm{level}_{\mathrm{c\text{-}DAG}}(Q)-\mathrm{level}_{\mathrm{Tree}}(Q) = k\right) = \begin{cases}
    \displaystyle \frac{2^{\kappa}\cdot(2^{n-\kappa+1}-s)}{N-s} = \frac{N - 2^\kappa \cdot s}{N-s} & \text{for } k=0 \\[10pt]
    \displaystyle \frac{2^{\kappa-k}\cdot s}{N-s} & \text{for } k \in \{1,\dots,\kappa\}
\end{cases}
\end{align*}

For $\frac{c-2}{c-1}\,2^{\,n-\kappa+1} < s \le 2^{\,n-\kappa+1}$, by Lemma \ref{lem:ldd-large-s}
\begin{align*}
    &    \mathbb{P}_{\mathcal{I}_s}\left(\text{level}_{\text{$c$-DAG}}(Q) - \text{level}_{\text{Tree}}(Q) = k\right)=     \begin{cases}
        \displaystyle \frac{-(c-4)\,2^{\,n} + (c-3)\,2^{\,\kappa-1}\,s}{N-s} & \text{for } k=0\\[10pt] 
        \displaystyle \frac{(c-2)\,2^{\,n-k} \;-\; (c-3)\,2^{\,\kappa-k-1}\,s}{N-s} & \text{for }  k \in \{1,\dots,\kappa-1 \} \\[10pt]
        \displaystyle \frac{(c-2) \cdot 2^{\,n-\kappa+1} - (c-2)s}{N-s} & \text{for } k=\kappa
    \end{cases}
\end{align*}

\end{theorem}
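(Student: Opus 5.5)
The theorem assembles Lemmas~\ref{lem:ldd-small-s} and~\ref{lem:ldd-large-s}, so my plan is to show that these two lemmas, plus a separate treatment of $s=1$, cover every admissible query length exactly once.

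The first step is the partition of query lengths. For $1<s\le N$, set $\kappa=\lfloor\log(N/s)\rfloor$. By definition this gives $2^{n-\kappa}<s\le 2^{n-\kappa+1}$, with one exception: at $s=N$ we get $\kappa=0$, and the bound must be read as $s\le 2^{n+1}=N$ with the degenerate strict lower end. Since $\frac{c-2}{c-1}\in[\tfrac12,1)$ for $c=2^\alpha+1$, the threshold $\frac{c-2}{c-1}2^{n-\kappa+1}$ lies in $[2^{n-\kappa},2^{n-\kappa+1})$. So each such $s$ falls into exactly one of the two regimes in the statement. For $c=3$ the first regime is empty, as already noted in the proof of Lemma~\ref{lem:c-DAG-small-s}.

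The second step is to quote the lemmas. In the first regime I invoke Lemma~\ref{lem:ldd-small-s}, whose hypotheses are exactly $2^{n-\kappa}<s\le\frac{c-2}{c-1}2^{n-\kappa+1}$. In the second regime I invoke Lemma~\ref{lem:ldd-large-s}. Both lemmas already verify that their masses over $k\in\{0,\dots,\kappa\}$ sum to $1$, so nothing further is needed there. I would also note the ``quasi-dyadic'' shape of the result. In the small-$s$ regime the masses are proportional to $2^{-k}$ for $k\ge1$, with an adjusted $k=0$ term. In the large-$s$ regime the middle terms are proportional to $2^{-k}$ with a common factor $(c-2)2^n-(c-3)2^{\kappa-1}s$. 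This justifies the name.

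The third step handles $s=1$ separately. Here $\kappa=n+1$, and every query $[x,x+1)$ is contained in a leaf of both structures, since leaves represent single points. So both SRC-searches return level $n+1$ and the difference is $0$ with probability $1$. This is also consistent with the large-$s$ formula, because $s=1=2^{n-\kappa+1}$ lies in that regime with $\kappa$ terms vanishing appropriately.

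I expect the main obstacle to be the boundary cases rather than the main argument. One is checking that the small $\kappa$ values ($\kappa=0,1$), where the range $\{1,\dots,\kappa-1\}$ is empty, still give valid distributions. For $\kappa=0$ both structures return the root and $k=0$. The lemma formula then gives $\frac{-(c-4)2^n+(c-3)s/2}{N-s}$, which I would check against direct evaluation at $s=N$, or else treat $s=N$ as trivial with $k=0$. The other is ensuring that the uniform measure on $x\in[0,N-s]$ is the one the lemmas use.
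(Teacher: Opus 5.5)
Your plan matches the paper's proof. It disposes of $s=1$ directly, since both structures return the same leaf. For $s>1$ it notes that $\kappa=\lfloor\log(N/s)\rfloor$ puts $s$ in exactly one of the two regimes and cites Lemmas~\ref{lem:ldd-small-s} and~\ref{lem:ldd-large-s}.

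One side remark is false: $s=1$ is \emph{not} consistent with the large-$s$ formula. With $s=1$ you get $\kappa=n+1$ and $s=2^{n-\kappa+1}$, so $s$ does fall in the large-$s$ range. But the formula then gives $\mathbb{P}(k=0)=2^n/(N-1)$ and $\mathbb{P}(k)=2^{n-k}/(N-1)$ for $1\le k\le n$. Only the $k=\kappa$ term vanishes, and roughly half the mass sits on $k\ge 1$. This is exactly why the statement excludes $s=1$. Lemma~\ref{lem:ldd-large-s} counts over a continuous start $x\in[0,N-s]$, so $[x,x+1)$ straddles a boundary whenever $x$ is not an integer. The leaf argument instead treats the query as a single data point, so it must carry this case on its own. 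A smaller slip: there is no exception at $s=N$ in $2^{n-\kappa}<s\le 2^{n-\kappa+1}$, since with $\kappa=0$ it reads $2^n<2^{n+1}\le 2^{n+1}$.

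The genuine gap is the $\kappa=0$ case, which you flag but leave open; the paper's proof also passes over it. If you carry out the check, the large-$s$ formula fails on the whole range $\frac{c-2}{c-1}N<s\le N$, not just at $s=N$.
\begin{itemize}
\item For $\kappa=0$ the lines for $k=0$ and $k=\kappa$ describe the same event but disagree. The $k=\kappa$ line equals $c-2$.
\item The $k=0$ line exceeds $1$ throughout this range, because its numerator minus its denominator is $\frac{c-1}{2}s-(c-2)2^n>0$. For example, $c=3$, $N=4$, $s=3$ gives $2$.
\item The correct value is $\mathbb{P}(k=0)=1$: every level-1 node of either structure has length $2^n<s$, so both searches return the root.
\item The counting in Lemma~\ref{lem:ldd-large-s} uses $2^{\kappa-1}$ level-$\kappa$ primitive boundaries with $c-1$ gaps each. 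This is meaningless at $\kappa=0$, where the $c$-DAG has one level-0 node and no gaps.
\item Its sum-to-one identity relies on $\sum_{k=1}^{\kappa-1}2^{n-k}=2^{n-\kappa+1}(2^{\kappa-1}-1)$, which holds only for $\kappa\ge 1$.
\end{itemize}
So your claim that ``the lemmas already verify the masses sum to $1$, nothing further is needed'' does not hold there. You need to restrict the large-$s$ formula to $\kappa\ge 1$. Then treat $s>N/2$ like $s=1$, as a trivial case with $k=0$ almost surely. At $s=N$ the start position is forced and $N-s=0$, so the formula is undefined in any case. Your $\kappa=1$ concern is unfounded: the formulas are correct there.
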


\begin{proof}
We analyze the distribution of the level difference
$\mathrm{level}_{\mathrm{c-DAG}}(Q)-\mathrm{level}_{\mathrm{Tree}}(Q)$
for a uniformly random starting point $x\in\mathcal I_s$, where $|\mathcal I_s|=N-s$.
We treat $s=1$ separately and then handle $s>1$ via the two regimes already captured by Lemmas~\ref{lem:ldd-small-s} and~\ref{lem:ldd-large-s}. 
Recall that
$\kappa= \big\lfloor \log(N/s)\big\rfloor$
(such that $2^{\,n-\kappa} < s \le 2^{\,n-\kappa+1}$)
is the largest possible returned level number whose canonical interval can contain $Q=[x,x+s)$.

\noindent
    \textbf{Case $s = 1$.}
    Each range query consists of a single point, and both the $c$-DAG and 1D-Tree structures return the leaf node at level $n+1$ that contains this point. Therefore, the level difference is always zero, yielding:
    $\mathbb{P}_{\mathcal{I}_1}(\text{level}_{\text{$c$-DAG}}(Q) - \text{level}_{\text{Tree}}(Q) = 0) = 1$.

In the remainder of the proof, we focus on the complementary:

\smallskip
\noindent
\textbf{Case $s > 1$.}
By Lemma~\ref{lem:c-DAG-level}, the $c$-DAG can only return levels $\kappa$ or $\kappa-1$. We distinguish two subregimes according to the length of $s$.

\medskip\noindent\textbf{First.} $2^{n-\kappa} < s \le \frac{c-2}{c-1} \cdot 2^{n-\kappa+1}$: small-$s$ regime.
By Lemma~\ref{lem:c-DAG-small-s}, the $c$-DAG \emph{always} returns level $\kappa$ for all valid start positions of the query. 
Conditioned on this, the 1D-Tree level is determined by the first boundary straddled (Lemma~\ref{lem:1D-Tree-level}), and the counting of disjoint straddle intervals yields the mass function in Lemma~\ref{lem:ldd-small-s}. 

\noindent
For $k=0$: 
$\mathbb{P}_{\mathcal{I}_s}\left(\text{level}_{\text{$c$-DAG}}(Q) - \text{level}_{\text{Tree}}(Q) = 0\right) = \frac{2^{\kappa}\cdot(2^{n-\kappa+1}-s)}{N-s}$.

\noindent
For $k\in\{1,\dots,\kappa\}$:
$\mathbb{P}_{\mathcal{I}_s}\!\left(\mathrm{level}_{\mathrm{c-DAG}}(Q)-\mathrm{level}_{\mathrm{Tree}}(Q) = k\right)
= \frac{2^{\,\kappa-k}\cdot s}{\,N-s}$.

\medskip\noindent\textbf{Second.} $\frac{c-2}{c-1} \cdot 2^{n-\kappa+1} < s \le 2^{n-\kappa+1}$: large-$s$ regime.
Here, the $c$-DAG returns either level $\kappa$ or level $\kappa-1$, depending on alignment (Lemma~\ref{lem:c-DAG-level}). 
The disjoint refinement by the 1D-Tree’s boundary straddles (Lemma~\ref{lem:1D-Tree-level}) leads to the three families of events, tabulated in Lemma~\ref{lem:ldd-large-s}:

\noindent
For $k=0$: $\mathbb{P}_{\mathcal{I}_s}\left(\text{level}_{\text{$c$-DAG}}(Q) - \text{level}_{\text{Tree}}(Q) = 0\right)$ equals
\[
\frac{-(c-4)\,2^{\,n} + (c-3)\,2^{\,\kappa-1}\,s}{N-s}
\ .
\]

\noindent
For $k\in\{1,\dots,\kappa-1\}$: $\mathbb{P}_{\mathcal{I}_s}\!\left(\mathrm{level}_{\mathrm{c-DAG}}(Q)-\mathrm{level}_{\mathrm{Tree}}(Q) = k\right)
$ equals
\[
\frac{(c-2)\,2^{\,n-k} \;-\; (c-3)\,2^{\,\kappa-k-1}\,s}{N-s}
\ .
\]

\noindent
For $k= \kappa$: $\mathbb{P}_{\mathcal{I}_s}\left(\text{level}_{\text{$c$-DAG}}(Q) - \text{level}_{\text{Tree}}(Q) = \kappa\right)$ equals
\[
\frac{(c-2) \cdot 2^{\,n-\kappa+1} - (c-2)s}{N-s}
\ .
\]

\end{proof}

\section{1D-Tree vs. \lowercase{c}-DAG: Performance Analysis of 
Expected Time Difference}
\label{sec:time difference}
In this section, we analyze the time efficiency of SRC-search on 1D-Tree and on $c$-DAG, by comparing the difference between the levels of the nodes returned by the SRC-search on these structures. This level difference provides a practical measure of search time (up to a factor resulting from processing a single node by the search algorithm itself), as it reflects the relative depth of the nodes accessed by the search algorithm in each data structure when answering the same query. Specifically, we study the expected value, over a random query of a fixed length $s$ (drawn from the uniform distribution $\mathcal{I}_s$), of the level difference:
$\text{level}_{\text{c-DAG}}(Q) - \text{level}_{\text{Tree}}(Q)$,
which quantifies how much deeper, on average, the $ c $-DAG descends to achieve finer coverage.

\begin{theorem}
\label{thm:expected-level-dif}
Assume $N = 2^{n+1}$. For query length $s=1$, the expected level difference is $0$. For $1 < s \leq N$, the following bound holds, where  $c=2^\alpha+1$ (for an integer $\alpha\ge 1$):
    \[
    \mathbb{E}_{\mathcal{I}_s}\bigl[\mathrm{level}_{c\text{-}DAG}(Q) - \mathrm{level}_{\mathrm{Tree}}(Q)\bigr] < 2 \cdot \frac{c-2}{c-1}
    \ .
    \]
\end{theorem}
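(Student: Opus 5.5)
The plan is to compute the expectation directly from the level difference distribution of Theorem~\ref{thm:prob-dis} and bound it separately in the two regimes of $s$. The case $s=1$ is immediate from the first sentence of Theorem~\ref{thm:prob-dis}: the difference is identically $0$. For $s>1$, write $r=\frac{c-2}{c-1}\in[\tfrac12,1)$ and recall $2^{n-\kappa}<s\le 2^{n-\kappa+1}$, so that $2^{\kappa+1}\cdot 2^{n-\kappa+1}=2N$. Throughout I will use the closed form $\sum_{k=1}^{\kappa}k\,2^{\kappa-k}=2^{\kappa+1}-\kappa-2$.

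\emph{Small-$s$ regime} ($s\le r\,2^{n-\kappa+1}$). By Lemma~\ref{lem:ldd-small-s},
\[
\mathbb{E}=\frac{s\,(2^{\kappa+1}-\kappa-2)}{N-s}.
\]
Using $s\,2^{\kappa+1}\le r\,2^{n-\kappa+1}2^{\kappa+1}=2rN$, the numerator is at most $2rN-s(\kappa+2)$. The desired inequality $\mathbb{E}<2r$ is then equivalent to $2rN-s(\kappa+2)<2r(N-s)$, that is, to $\kappa+2>2r$. This holds because $r<1$ and $\kappa\ge 0$.

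\emph{Large-$s$ regime} ($s>r\,2^{n-\kappa+1}$). Rather than summing the three-case formula of Lemma~\ref{lem:ldd-large-s} term by term, I will decompose the difference as
\[
\mathrm{level}_{c\text{-}DAG}-\mathrm{level}_{\mathrm{Tree}}=(\kappa-\mathrm{level}_{\mathrm{Tree}})-\mathbf{1}[\mathrm{level}_{c\text{-}DAG}=\kappa-1].
\]
The 1D-Tree marginal depends only on $s$ (Lemma~\ref{lem:1D-Tree-level}): level $\kappa-j$ has probability $2^{\kappa-j}s/(N-s)$ for $j\ge1$. Hence the first term again has expectation $s(2^{\kappa+1}-\kappa-2)/(N-s)$. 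By Lemma~\ref{lem:c-DAG-level}, the second term has probability $G\,t/(N-s)$, where $t=s-r\,2^{n-\kappa+1}>0$ is the gap length and $G=(c-1)(2^{\kappa}-1)+1$ is the number of gaps. As a sanity check, I will verify that this expectation matches the sum $\sum_k k\,p_k$ obtained from Lemma~\ref{lem:ldd-large-s}.

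Substituting $s=r\,2^{n-\kappa+1}+t$ gives $s\,2^{\kappa+1}=2rN+t\,2^{\kappa+1}$, so the numerator becomes
\[
2rN-s(\kappa+2)+t\,(2^{\kappa+1}-G).
\]
When $c\ge5$ we have $G\ge 2^{\kappa+1}$ (for $\kappa\ge1$), so the last term is nonpositive and the small-regime argument applies verbatim.

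The main obstacle is the boundary bookkeeping, namely the exact count $G$ and the degenerate cases $\kappa=0$ and $c=3$. When $c=3$ we get $G=2^{\kappa+1}-1$, leaving a surplus of exactly $t$. In that case I must show $2rN-s(\kappa+2)+t<2r(N-s)$, i.e.\ $t<s(\kappa+2-2r)=s(\kappa+1)$ since $r=\tfrac12$. This holds because $t<s$. The case $\kappa=0$ (so $s>N/2$) I will check directly: there $G=1$, and the expectation equals $(2s-N)-(s-rN)$ divided by $N-s$, which is small. Combining the two regimes gives the strict bound $2\cdot\frac{c-2}{c-1}$.
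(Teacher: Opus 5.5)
Your small-$s$ argument is correct and essentially matches the paper's: both reduce to $\kappa+2>2r$, you by bounding $s\,2^{\kappa+1}\le 2rN$ directly, the paper by monotonicity and plugging in the endpoint.

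In the large-$s$ regime your route is genuinely different. The paper sums $\sum_k k\,p_k$ over the three-case joint pmf of Lemma~\ref{lem:ldd-large-s} and obtains $\mathbb{E}[k](N-s)=(c-2)2^{n-\kappa+1}(2^{\kappa}-1)+s\bigl[(3-c)(2^{\kappa}-1)-\kappa\bigr]$. It then asserts this is decreasing in $s$ and evaluates at the excluded endpoint $s=r\,2^{n-\kappa+1}$. Your split $(\kappa-\mathrm{level}_{\mathrm{Tree}})-\mathbf{1}[\mathrm{level}_{c\text{-}DAG}=\kappa-1]$ needs only the two marginals from Lemmas~\ref{lem:1D-Tree-level} and~\ref{lem:c-DAG-level}. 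So it avoids the delicate joint counting behind Lemma~\ref{lem:ldd-large-s}. Your pointwise check via $s=r\,2^{n-\kappa+1}+t$ also avoids the monotonicity claim. The paper justifies that claim only by the sign of the coefficient of $s$, which is insufficient on its own because the denominator $N-s$ also decreases in $s$.

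The step you flagged as the main obstacle is, however, wrong as stated. The level-$\kappa$ containment intervals are indexed by $m=0,\dots,(c-1)(2^{\kappa}-1)$; the first starts at $0$ and the last ends at $N-s$. So there are $G=(c-1)(2^{\kappa}-1)$ gaps between consecutive ones, not $(c-1)(2^{\kappa}-1)+1$; Lemma~\ref{lem:c-DAG-level}'s own gap union runs over $m=0,\dots,(c-1)2^{\kappa}-c$.

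With your $G$, the quantity you bound is $\mathbb{E}-t/(N-s)$, which does not control $\mathbb{E}$. Your proposed sanity check against Lemma~\ref{lem:ldd-large-s} would in fact fail. The error propagates in two places:
\begin{itemize}
\item For $c=3$ the surplus is $t(2^{\kappa+1}-G)=2t$, not $t$, so you need $2t<(\kappa+1)s$. This still holds for $\kappa\ge1$ because $t<s$, but it is not the inequality you verified.
\item For $\kappa=0$ you have $G=0$: only the root lies at level $0$, so the $c$-DAG never returns level $\kappa-1$. Also $\mathrm{level}_{\mathrm{Tree}}\equiv 0$, so the expectation is exactly $0$. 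Your stated value $\bigl((2s-N)-(s-rN)\bigr)/(N-s)$ is not small. For $c=3$ it equals $(s-N/2)/(N-s)$, which reaches $1=2r$ at $s=3N/4$ and is unbounded as $s\to N$. So that line as written contradicts the theorem rather than proving it.
\end{itemize}

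With the corrected count $G=(c-1)(2^{\kappa}-1)$, your numerator becomes exactly the paper's closed form above. The case $c\ge 5$, $\kappa\ge 1$ then goes through as you planned, since $(c-1)(2^{\kappa}-1)\ge 2^{\kappa+1}$ there, and the proof closes.
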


\begin{proof}
When $s = 1$, both the 1D-Tree and $c$-DAG always return the same leaf node, yielding the expected level difference of $0$. 

\noindent
For query length $s>1$, the proof goes via same cases as in Theorem~\ref{thm:prob-dis}.

\medskip\noindent\textbf{Case 1.} $2^{n-\kappa} < s \le \frac{c-2}{c-1} \cdot 2^{n-\kappa+1}$: small-$s$ regime.  

By Lemma~\ref{lem:c-DAG-small-s}, the $c$-DAG always returns level $\kappa$. The level difference \(k = \mathrm{level}_{c\text{-}DAG}(Q) - \mathrm{level}_{\mathrm{Tree}}(Q)\) has probability \(0\) 
for \(k=0\), and for $k \in \{1,\dots,\kappa\}$:
$\mathbb{P}(k) = \frac{2^{\kappa-k} s}{N-s}$.
The expectation is therefore
\[
\mathbb{E}[k] = \sum_{k=1}^\kappa k \cdot \frac{2^{\kappa-k} s}{N-s} = \frac{s}{N-s} \sum_{k=1}^\kappa k \cdot 2^{\kappa-k}.
\]
To evaluate the arithmo-geometric sum, rewrite it as $2^\kappa \sum_{k=1}^\kappa k (\frac{1}{2})^k$. Using the formula
$$\sum_{k=1}^m k r^k = r \frac{1 - (m+1) r^m + m r^{m+1}}{(1-r)^2}\ ,$$
with $r=\frac{1}{2}$ and $m=\kappa$, we get
\begin{align*}
    \sum_{k=1}^\kappa k \left(\frac{1}{2}\right)^k 
    &= \frac{1}{2} \cdot \frac{1 - (\kappa+1) (\frac{1}{2})^\kappa + \kappa (\frac{1}{2})^{\kappa+1}}{(\frac{1}{2})^2} = 2 \left[1 - (\kappa+1) 2^{-\kappa} + \kappa 2^{-\kappa-1}\right] = 2 - (\kappa+1) 2^{1-\kappa} + \kappa 2^{-\kappa}\\
    & = 2 - (\kappa+2) 2^{-\kappa}.
\end{align*}
Thus,
$\sum_{k=1}^\kappa k \cdot 2^{\kappa-k} = 2^{\kappa+1} - \kappa - 2$,
and consequently,
\[
\mathbb{E}[k] = \frac{s(2^{\kappa+1} - \kappa - 2)}{N-s}
\ .
\]
$\mathbb{E}[k]$ is increasing in $s$, and its derivative is positive.
Since $2^{n-\kappa} < s \le \frac{c-2}{c-1} \cdot 2^{n-\kappa+1}$, we have
\begin{align*}
   \mathbb{E}[k] & = \frac{s(2^{\kappa+1} - \kappa - 2)}{N-s} \le \frac{\frac{c-2}{c-1} \cdot 2^{n-\kappa+1}(2^{\kappa+1} - \kappa - 2)}{2^{n+1}- \frac{c-2}{c-1} \cdot 2^{n-\kappa+1}} = \frac{\frac{c-2}{c-1} \cdot 2^{n-\kappa+1}(2^{\kappa+1} - \kappa - 2)}{2^{n-\kappa+1} (2^\kappa - \frac{c-2}{c-1}) } = \frac{\frac{c-2}{c-1} (2^{\kappa+1} - \kappa - 2)}{2^\kappa - \frac{c-2}{c-1} }\\
& < 2 \frac{c-2}{c-1}
 \ .
\end{align*}
 
Note that in case of $\kappa=0$, instead of the last inequality we could precisely compute $\mathbb{E}[k] = 0$, which is strictly smaller than the bound $2\frac{c-2}{c-1}$.

\medskip\noindent\textbf{Case 2.} $\frac{c-2}{c-1} \cdot 2^{n-\kappa+1} < s \le 2^{n-\kappa+1}$: large-$s$ regime.  

By Lemma~\ref{lem:ldd-large-s}, the level difference $k = \mathrm{level}_{c\text{-}DAG}(Q) - \mathrm{level}_{\mathrm{Tree}}(Q)$  
contributes $0$ to the expectation for $k=0$, and for $k>0$
the relevant probabilities are
\[
\begin{cases}
    \displaystyle \mathbb{P}(k=\kappa) = \frac{(c-2) \bigl(2^{n-\kappa+1} - s\bigr)}{N-s}\ , \\
    \displaystyle \mathbb{P}(k) = \frac{(c-2)\,2^{n-k} - (c-3)\,2^{\kappa-k-1}\,s}{N-s} \quad\text{for }k=1,\dots,\kappa-1
    \ .
\end{cases}
\]

The expectation $\mathbb{E}[k]$ is therefore equal to
\begin{align*} 
&\sum_{k=1}^{\kappa-1} k \cdot \frac{(c-2)\,2^{n-k} - (c-3)\,2^{\kappa-k-1}\,s}{N-s}
+ \kappa \cdot \frac{(c-2) \bigl(2^{n-\kappa+1} - s\bigr)}{N-s}
\\
&= 
\sum_{k=1}^{\kappa-1} k \cdot \frac{(c-2),2^{n-k} - (c-3),2^{\kappa-k-1},s}{N-s}
\kappa \cdot \frac{(c-2) \bigl(2^{n-\kappa+1} - s\bigr)}{N-s} \\
&= \frac{1}{N-s} \left[ (c-2) \sum_{k=1}^{\kappa-1} k 2^{n-k} - (c-3) s \sum_{k=1}^{\kappa-1} k 2^{\kappa-k-1}\right] + \frac{1}{N-s}\left[ \kappa (c-2) (2^{n-\kappa+1} - s) \right] 
\ .
\end{align*}
We can further compute the above two 
summands 
for $m=\kappa-1$:
\begin{align*}
\sum_{k=1}^{m} k 2^{n-k} 
&= 
2^{n+1} - (m+2) 2^{n-m} = 2^{n+1} - (\kappa+1) 2^{n-\kappa+1}
\ ,
\\
\sum_{k=1}^{m} k 2^{\kappa-k-1} 
&= 
2^{\kappa-1} \left[2 - (m+2) 2^{-m}\right] = 2^\kappa - (\kappa+1)
\ .
\end{align*}
Substituting these sums in the main derivation, we further compute the numerator of $\mathbb{E}[k]$ as follows:
\begin{align*}
& (c-2) [2^{n+1} - (\kappa+1) 2^{n-\kappa+1}] - (c-3) s [2^\kappa - (\kappa+1)] + \kappa (c-2) 2^{n-\kappa+1} - \kappa (c-2) s \\
&= (c-2) 2^{n+1} - (c-2) (\kappa+1) 2^{n-\kappa+1} + \kappa (c-2) 2^{n-\kappa+1} - (c-3) s 2^\kappa + (c-3) s (\kappa+1) - \kappa (c-2) s \\
&= (c-2) 2^{n+1} - (c-2) 2^{n-\kappa+1} - (c-3) s 2^\kappa + s (c-3 - \kappa)\\
& =(c-2) 2^{n-\kappa+1} (2^\kappa - 1) + s [(3 - c)(2^\kappa - 1) - \kappa]
\ .
\end{align*}
Thus,
\[
\mathbb{E}[k] = \frac{(c-2) \, 2^{n-\kappa+1} (2^{\kappa} - 1) + s \bigl[ (3 - c) (2^{\kappa} - 1) - \kappa \bigr]}{N - s}
\ .
\]
$\mathbb{E}[k]$ is decreasing in $s$ because the numerator coefficient $(3 - c)(2^\kappa - 1) - \kappa < 0$ (since $3 - c \le 0$ and $-\kappa < 0$).
Since $\frac{c-2}{c-1} \cdot 2^{n-\kappa+1} < s \le 2^{n-\kappa+1}$, we have
\begin{align*}
     \mathbb{E}[k] & = \frac{(c-2) \, 2^{n-\kappa+1} (2^{\kappa} - 1) + s \bigl[ (3 - c) (2^{\kappa} - 1) - \kappa \bigr]}{N - s}\\
    & < \frac{(c-2) \, 2^{n-\kappa+1} (2^{\kappa} - 1) + \frac{c-2}{c-1} \cdot 2^{n-\kappa+1} \bigl[ (3 - c) (2^{\kappa} - 1) - \kappa \bigr]}{N - \frac{c-2}{c-1} \cdot  2^{n-\kappa+1}}\\
    & = \frac{(c-2) 2^{n-\kappa+1} (2^{\kappa} - 1) \left[1 + \frac{3-c}{c-1} \right]- \frac{c-2}{c-1} \cdot \kappa \cdot 2^{n-\kappa+1} }{2^{n-\kappa+1} (2^\kappa - \frac{c-2}{c-1})}\\
    & = \frac{\frac{c-2}{c-1} \cdot 2 \cdot  2^{n-\kappa+1} (2^{\kappa} - 1) - \frac{c-2}{c-1} \cdot \kappa \cdot  2^{n-\kappa+1} }{2^{n-\kappa+1} (2^\kappa - \frac{c-2}{c-1})}\\
    & = \frac{\frac{c-2}{c-1} \cdot 2 \cdot (2^{\kappa} - 1) - \frac{c-2}{c-1} \cdot \kappa }{2^\kappa - \frac{c-2}{c-1}} 
    \ = \ \frac{\frac{c-2}{c-1} \cdot 2^{\kappa+1} - \frac{c-2}{c-1} \cdot (2+\kappa) }{2^\kappa - \frac{c-2}{c-1}} 
    \\
    & < 2 \frac{c-2}{c-1}
    \ .
\end{align*}
Similarly to Case 1, for $\kappa=0$, instead of the final inequality we could derive exact value of $\mathbb{E}[k]=0$.
\end{proof}

This constant additive overhead shows that the $c$-DAG incurs only a bounded increase in average search depth relative to 1D-Tree.

\section{1D-Tree vs. \lowercase{c}-DAG: Performance Analysis of 
FP-Competitive Ratio}
\label{sec:false positive ratio}

In this section, we analyze and compare the false positive (FP) behavior of the SRC-search on 1D-Tree and on $c$-DAG. 

Recall that the FP-competitive ratio is defined as
\[
\frac{FP_{\mathrm{Tree}}(Q)}{FP_{\mathrm{c\text{-}DAG}}(Q)}
= 
\frac{\text{size $|\mathcal{D}_v|$ of node $v$ returned by the 1D-Tree}}{\text{size $|\mathcal{D}_w|$ of node $w$ returned by the $c$-DAG}}
\ .
\]
Since all intervals associated with nodes at any level $\ell$ in both structures have length $2^{n-\ell+1}$,
for
the level difference
$k \;=\;
\mathrm{level}_{\mathrm{c\text{-}DAG}}(Q)
-
\mathrm{level}_{\mathrm{Tree}}(Q)$,
we have
\[
\frac{FP_{\mathrm{Tree}}(Q)}{FP_{\mathrm{c\text{-}DAG}}(Q)}
= \frac{2^{\,n-\mathrm{level}_{\mathrm{Tree}}(Q)+1}}
       {2^{\,n-\mathrm{level}_{\mathrm{c\text{-}DAG}}(Q)+1}}
= 2^{\,\mathrm{level}_{\mathrm{c\text{-}DAG}}(Q)
      - \mathrm{level}_{\mathrm{Tree}}(Q)}
= 2^{k}
\ .
\]
Thus, the FP-competitive ratio is completely determined by the level difference
distribution studied in Section~\ref{sec:distribution}.

\begin{theorem}[Expected FP-competitive ratio]
\label{thm:fp-ratio-new}
Assume $N = 2^{n+1}$. Let $x$ be drawn uniformly at random from $[0,N-s]$.
Consider a random query $Q=[x,x+s)$ of length $s$. Then, for $s=1$, both the 1D-Tree and the $c$-DAG return the same leaf, and consequently
\[
\mathbb{E}_{\mathcal{I}_1}\left[
\frac{FP_{\mathrm{Tree}}(Q)}{FP_{\mathrm{c\text{-}DAG}}(Q)}
\right] = 1
\ .
\]
For $1 < s \le N$, 
\[
\mathbb{E}_{\mathcal{I}_s}\!\left[
\frac{FP_{\mathrm{Tree}}(Q)}{FP_{\mathrm{c-DAG}}(Q)}
\right] \geq \max \left\{1, \frac{1}{2}\left\lfloor\log\frac{N}{s} \right\rfloor\right\}
\ .
\]
\end{theorem}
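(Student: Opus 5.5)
The plan is to use the identity derived just before the statement: for every query the competitive ratio equals $2^{k}$, where $k=\mathrm{level}_{\mathrm{c\text{-}DAG}}(Q)-\mathrm{level}_{\mathrm{Tree}}(Q)$. The expectation then becomes $\sum_k 2^k\,\mathbb{P}_{\mathcal{I}_s}(k)$ over the LDD of Theorem~\ref{thm:prob-dis}.

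The case $s=1$ follows at once from the first sentence of Theorem~\ref{thm:prob-dis}: $k=0$ almost surely, so the expectation is $2^0=1$. For $s>1$, the lower bound $1$ holds because $k\ge 0$ always, which gives $2^k\ge 1$. It remains to show that the expectation is at least $\frac12\kappa$ with $\kappa=\lfloor\log\frac{N}{s}\rfloor$. Only $\kappa\ge 3$ needs work, since for $\kappa\le 2$ the bound $\frac12\kappa$ does not exceed $1$. I would split into the two regimes of Theorem~\ref{thm:prob-dis}.

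In the small-$s$ regime ($2^{n-\kappa}<s\le\frac{c-2}{c-1}2^{n-\kappa+1}$) the computation is exact. Discarding the nonnegative $k=0$ term,
\[
\sum_{k=1}^{\kappa}2^k\cdot\frac{2^{\kappa-k}s}{N-s}=\frac{\kappa\,2^{\kappa}s}{N-s}.
\]
Using $s>2^{n-\kappa}$ gives $2^\kappa s>2^n=N/2$, and trivially $N-s<N$. So the expectation exceeds $\kappa\cdot\frac{N/2}{N}=\frac{\kappa}{2}$.

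In the large-$s$ regime ($\frac{c-2}{c-1}2^{n-\kappa+1}<s\le 2^{n-\kappa+1}$) I would compute each $2^k\mathbb{P}(k)$ from Lemma~\ref{lem:ldd-large-s}. For $1\le k\le\kappa-1$,
\[
2^k\mathbb{P}(k)=\frac{(c-2)2^n-(c-3)2^{\kappa-1}s}{N-s},
\]
which is independent of $k$. Since $2^{\kappa-1}s\le 2^n$, this numerator is at least $(c-2)2^n-(c-3)2^n=2^n$. The $\kappa-1$ middle terms therefore contribute at least $(\kappa-1)\frac{2^n}{N}=\frac{\kappa-1}{2}$. The $k=\kappa$ term is $2^\kappa(c-2)(2^{n-\kappa+1}-s)/(N-s)\ge 0$, and the $k=0$ term is at least $0$ (it is a probability). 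This gives only $\frac{\kappa-1}{2}$, so half a unit is still missing.

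Recovering that half-unit is the main obstacle. I would first try to use the $k=0$ and $k=\kappa$ terms together with the fact that the probabilities sum to $1$. For example, bound the expectation below by $\mathbb{P}(0)+\mathbb{P}(\kappa)\cdot 2^\kappa+\sum_{k=1}^{\kappa-1}2^k\mathbb{P}(k)$, where $\mathbb{P}(0)\ge \frac{2^n-(c-3)2^{n}+(c-3)2^{\kappa-1}s}{N}$-type expressions are evaluated at the regime endpoints. A cleaner route is to keep the exact denominator $N-s$ rather than $N$. Since $s\le 2^{n-\kappa+1}$, we have $N-s\le N(1-2^{-\kappa})$, so the middle terms give at least $\frac{\kappa-1}{2(1-2^{-\kappa})}$. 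Adding the $k=0$ mass, which is positive in this regime, and checking the endpoint $s\to 2^{n-\kappa+1}$, where the middle terms become $\frac{2^n(c-2)-(c-3)2^n}{N-s}$, should close the gap for $\kappa\ge 3$. If that fails, I would settle for verifying numerically that the inequality is tight only at the endpoints and handle those via the exact formula.
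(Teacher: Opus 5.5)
Your handling of $s=1$, of the lower bound $1$ (via $k\ge 0$), and of the small-$s$ regime is correct. In the small-$s$ regime you even reach $\frac{\kappa}{2}$ a bit more directly than the paper, by dropping the $k=0$ term.

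The gap is in the large-$s$ regime. After bounding the $\kappa-1$ middle terms by $\frac{\kappa-1}{2}$, you treat the $k=0$ and $k=\kappa$ terms only as nonnegative, and none of the repairs you sketch supplies the missing $\frac12$.

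The ``exact denominator'' route rests on a reversed inequality: $s\le 2^{n-\kappa+1}$ gives $N-s\ge N(1-2^{-\kappa})$, not $\le$. The upper bound actually available in this regime, $N-s<N\bigl(1-\frac{c-2}{c-1}2^{-\kappa}\bigr)$, improves the middle terms by only $O(\kappa 2^{-\kappa})$.

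In fact, at $s=2^{n-\kappa+1}$ the middle terms equal exactly $\frac{\kappa-1}{2(1-2^{-\kappa})}<\frac{\kappa}{2}$, and the $k=\kappa$ term vanishes. So the $k=0$ term must be used quantitatively. Mere positivity of $\mathbb{P}(0)$ gives nothing, and $\mathbb{P}(0)$ by itself is not uniformly above $\frac12$ when $c\ge 5$: its numerator drops toward $\frac{2}{c-1}2^n$ near the lower end of the regime. Checking endpoints numerically is not a proof.

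The missing step is to pair the two boundary terms, whose shortfalls compensate each other:
\[
\mathbb{P}(0)+2^{\kappa}\mathbb{P}(\kappa)
=\frac{-(c-4)2^{n}+(c-3)2^{\kappa-1}s+(c-2)2^{n+1}-(c-2)2^{\kappa}s}{N-s}
=\frac{c\,2^{n}-(c-1)2^{\kappa-1}s}{N-s}
\ge\frac{2^{n}}{N-s}>\frac12 .
\]
This uses the same inequality $2^{\kappa-1}s\le 2^{n}$ that you already applied to the middle terms. Together with your middle-term bound, the whole numerator is at least $\kappa\,2^{n}$, so the expectation exceeds $\frac{\kappa}{2}$.

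This is exactly the paper's argument, done in one stroke. It sums all contributions to get the numerator $\bigl((c-2)\kappa+2\bigr)2^{n}-\bigl((c-3)\kappa+2\bigr)2^{\kappa-1}s$. It then applies $2^{\kappa-1}s\le 2^n$ once to obtain $\kappa\,2^n$, and divides by $N>N-s$.
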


\begin{proof}

As observed earlier, by construction of the
two structures, the false positive ratio takes values $2^k$ with probability
$\mathbb{P}(k)$, where $\mathbb{P}(\cdot)$ is the level difference
distribution in Theorem~\ref{thm:prob-dis}.

\medskip\noindent
\textbf{First, consider $s = 1$.}
Both the 1D-Tree and the $c$-DAG return the unique leaf at level $n+1$
containing the query range. Thus, $k=0$ for all $x$, and therefore
$\frac{FP_{\mathrm{Tree}}(Q)}{FP_{\mathrm{c\text{-}DAG}}(Q)} = 2^{0} = 1$ and the expectation is~$1$.

\medskip\noindent\textbf{Second, consider $s > 1$.}
Let $\kappa = \lfloor \log \frac{N}{s}\rfloor$, so that
$2^{\,n-\kappa} < s \le 2^{\,n-\kappa+1}$.
Following Theorem~\ref{thm:prob-dis}, we distinguish two complementary cases: of the small $s$ and the large $s$
regimes.

\medskip
\medskip\noindent\textbf{\em Case 1.} $2^{n-\kappa} < s \le \frac{c-2}{c-1} \cdot 2^{n-\kappa+1}$: small-$s$ regime. 

In this case, Lemma~\ref{lem:c-DAG-small-s} implies that the $c$-DAG always
returns level $\kappa$, and Lemma~\ref{lem:ldd-small-s} gives the distribution of
the level difference~$k$:
\[
\mathbb{P}(k=0)
= \frac{N - 2^\kappa \cdot s}{N-s}
\ ,
\qquad
\mathbb{P}(k)
= \frac{2^{\kappa-k}\,s}{N-s},
\quad k \in \{1,\dots,\kappa\}
\ .
\]
Since
$\frac{FP_{\mathrm{Tree}}(Q)}{FP_{\mathrm{c\text{-}DAG}}(Q)}=2^k$, the expected FP-competitive ratio is
\begin{equation}
\label{eq:FP-thm-sum}
\mathbb{E}\left[\frac{FP_{\mathrm{Tree}}(Q)}{FP_{\mathrm{c\text{-}DAG}}(Q)}\right]
= \sum_{k=0}^{\kappa} 2^k \,\mathbb{P}(k)
\ .
\end{equation}
We have
$2^0\,\mathbb{P}(0)
= \frac{N - 2^\kappa \cdot s}{N-s}$,
and for $k \in \{1,\dots,\kappa\}$,
\[
2^k\,\mathbb{P}(k)
= 2^k \cdot \frac{2^{\kappa-k}\,s}{N-s}
= \frac{2^{\kappa}\,s}{N-s} \ ,
\]
which is independent on $k$. There are $\kappa$ such terms in Eq.~(\ref{eq:FP-thm-sum}), hence
\[
\mathbb{E}\left[\frac{FP_{\mathrm{Tree}}(Q)}{FP_{\mathrm{c\text{-}DAG}}(Q)}\right]
= \frac{ N - 2^\kappa \cdot s + \kappa \cdot 2^{\kappa}  \cdot s}{N-s}
\ .
\]
This can be further bounded as follows:
\begin{align*}
    & \mathbb{E}_{\mathcal{I}_s}
    \left[
    \frac{FP_{\mathrm{Tree}}(Q)}{FP_{\mathrm{c-DAG}}(Q)}
    \right]
    =  \frac{N + (\kappa-1)2^{\kappa}s}{N-s} > \frac{N + (\kappa-1)2^{\kappa}s}{N} \ .
\end{align*}
Since $2^{\kappa+1} > \frac{N}{s}$  (from definition of $\kappa$), $2^\kappa > \frac{N}{2s}$, and then $2^\kappa s > \frac{N}{2}$.
Consequently,
\begin{align*}
    & \mathbb{E}_{\mathcal{I}_s}
    \left[
    \frac{FP_{\mathrm{Tree}}(Q)}{FP_{\mathrm{c-DAG}}(Q)} \right]  > 1 + \frac{\kappa-1}{2} = \frac{\kappa+1}{2} > \frac{1}{2} \log \frac{N}{s} \ .
\end{align*}
\medskip
\medskip\noindent\textbf{\em Case 2.} $\frac{c-2}{c-1} \cdot 2^{n-\kappa+1} < s \le 2^{n-\kappa+1}$: large-$s$ regime. 

In this case, Lemma~\ref{lem:c-DAG-level} implies that the $c$-DAG can return
either level $\kappa$ or level $\kappa-1$, and Lemma~\ref{lem:ldd-large-s} gives the
distribution of $k$:
\[
\mathbb{P}(k=0)
= \frac{-(c-4)\,2^{n} + (c-3)\,2^{\kappa-1}\,s}{N-s}
\ ,
\]
\[
\mathbb{P}(k)
= \frac{(c-2)\,2^{n-k} - (c-3)\,2^{\kappa-k-1}\,s}{N-s} \ ,
\quad k \in \{1,\dots,\kappa-1\}
\ ,
\]
\[
\mathbb{P}(k=\kappa)
= \frac{(c-2)\bigl(2^{\,n-\kappa+1}-s\bigr)}{N-s}
\ .
\]
Recall Eq.~(\ref{eq:FP-thm-sum}):
$\mathbb{E}\left[
\frac{FP_{\mathrm{Tree}}(Q)}{FP_{\mathrm{c-DAG}}(Q)}
\right]
= \sum_{k=0}^{\kappa} 2^k\,\mathbb{P}(k)$.
We compute the contributions to this sum for $k=0$, $1\le k\le \kappa-1$, and $k=\kappa$ separately.
\noindent
For $k=0$:
\[
2^0\,\mathbb{P}(0)
= \frac{-(c-4)\,2^{n} + (c-3)\,2^{\kappa-1}\,s}{N-s}
\ .
\]

\noindent
For $k \in \{1,\dots,\kappa-1\}$, we have
\begin{align*}
    2^k\,\mathbb{P}(k)
&= \frac{2^k\bigl((c-2)\,2^{n-k} - (c-3)\,2^{\kappa-k-1}\,s\bigr)}{N-s} = \frac{(c-2)\,2^{n} - (c-3)\,2^{\kappa-1}\,s}{N-s}
\ ,
\end{align*}
which is independent of $k$. There are $\kappa-1$ such terms in Eq~(\ref{eq:FP-thm-sum}), thus their total~is
\[
(\kappa-1)\cdot\frac{(c-2)\,2^{n} - (c-3)\,2^{\kappa-1}\,s}{N-s}
\ .
\]

\noindent
For $k=\kappa$:
\[
2^{\kappa}\,\mathbb{P}(\kappa)
= \frac{2^{\kappa}(c-2)\bigl(2^{\,n-\kappa+1}-s\bigr)}{N-s}
= \frac{(c-2)\,2^{n+1} - (c-2)\,2^{\kappa}\,s}{N-s}.
\]
Summing up the contributions to Eq~(\ref{eq:FP-thm-sum}) yields the formula for
$\mathbb{E}\left[\frac{FP_{\mathrm{Tree}}(Q)}{FP_{\mathrm{c-DAG}}(Q)}\right]$, of which the numerator is
\[
\begin{aligned}
&-(c-4)\,2^{n} + (c-3)\,2^{\kappa-1}\,s  
+ (\kappa-1)\bigl((c-2)\,2^{n} - (c-3)\,2^{\kappa-1}\,s\bigr) + (c-2)\,2^{n+1} - (c-2)\,2^{\kappa}\,s\\
& = \bigl((c-2)\kappa + 2\bigr)\,2^{n}
  - \bigl((c-3)\kappa + 2\bigr) 2^{\kappa-1}\,s
  \ .
\end{aligned}
\]
Therefore,
\begin{align*} 
\mathbb{E}_{\mathcal{I}_s}
    \left[
\frac{FP_{\mathrm{Tree}}(Q)}{FP_{\mathrm{c-DAG}}(Q)}
\right]
= \frac{\bigl((c-2)\kappa + 2\bigr)\,2^{n}
  - \bigl((c-3)\kappa + 2\bigr) 2^{\kappa-1}\,s}{N-s} > \frac{\bigl((c-2)\kappa + 2\bigr)\,2^{n}
  - \bigl((c-3)\kappa + 2\bigr) 2^{\kappa-1}\,s}{N}
\ .
\end{align*} 
Since $2^\kappa \leq \frac{2^{n+1}}{s}$, and thus $2^{\kappa-1}s \leq 2^n$,
we have:
\begin{align*}
\mathbb{E}_{\mathcal{I}_s}
\left[
\frac{FP_{\mathrm{Tree}}(Q)}{FP_{\mathrm{c-DAG}}(Q)}
\right]
&> \frac{\bigl((c-2)\kappa + 2\bigr)2^{n}
  - \bigl((c-3)\kappa + 2\bigr) 2^n}{N}  = \frac{\kappa \cdot 2^n}{N}= \frac{\kappa}{2} = \frac{1}{2} \left\lfloor\log\frac{N}{s} \right\rfloor
  \ .
\end{align*}
To conclude, in both cases we obtained the sought bound.
\end{proof}

\vspace*{-0.5ex}
\section{1D-Tree vs. \lowercase{c}-DAG: Skewed Data}
\label{sec:skewed}

Consider a discretized interval domain with $M$ points, with unit distance between consecutive points.
Suppose $N$ data points are distributed over that domain, arbitrarily or over some probability distribution. By sacrificing a constant factor $<2$, 
we assume $N=2^{n+1}$. For a query length $1 \le s < M$, we evaluate performance of the SRC-search on the considered data-dependent structures as follows: 
\begin{enumerate}
\item
Compute the level difference distribution $\mathcal{D}_s$ by sampling queries from $\mathcal{I}_s$ and finding returned nodes in the 1D-Tree and $c$-DAG. 
\item
If $\mathcal{D}_s$ is $\varepsilon$-close (in $L^2$-norm)
to a dyadic distribution from Theorem~\ref{thm:prob-dis} taken for some query length parameter $s^\star$, apply the bounds from Theorems~\ref{thm:expected-level-dif} and~\ref{thm:fp-ratio-new} for $s^\star$. 
\end{enumerate}
The results approximate the expected level difference and FP-competitive ratio with error $\varepsilon$.

\begin{corollary}
\label{cor:level-difference}
For a dataset of $N$ points over domain $M$, and query length $s$, if the levels' difference distribution $\mathcal{D}_s$ is $\varepsilon$-close to a quasi-dyadic distribution from Theorem ~\ref{thm:prob-dis} for some query length parameter $s^\star$ and 
$\varepsilon>0$, then the expected level difference is  
at most $2 \cdot \frac{c-2}{c-1} + \varepsilon\kappa$, where $\kappa= \big\lfloor \log(N/s^\star)\big\rfloor$ is the maximum possible level difference.
\end{corollary}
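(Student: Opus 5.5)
We compare the expectation under $\mathcal{D}_s$ with the expectation under the quasi-dyadic reference distribution, call it $P^\star$, given by Theorem~\ref{thm:prob-dis} for the parameter $s^\star$. Write $p(k)=\mathcal{D}_s(k)$ and $p^\star(k)=P^\star(k)$, both for $k\in\{0,1,\dots,\kappa\}$, where $\kappa=\lfloor\log(N/s^\star)\rfloor$. The plan is to split
\[
\mathbb{E}_{\mathcal{D}_s}[k] = \mathbb{E}_{P^\star}[k] + \sum_{k} k\,\bigl(p(k)-p^\star(k)\bigr).
\]
We then bound the first term by Theorem~\ref{thm:expected-level-dif} and the second term by the closeness assumption.

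\textbf{Step 1 (reference term).} Since $P^\star$ is exactly the level difference distribution for uniform queries of length $s^\star$, Theorem~\ref{thm:expected-level-dif} (with the trivial case $s^\star=1$ giving $0$) yields $\mathbb{E}_{P^\star}[k] < 2\frac{c-2}{c-1}$.

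\textbf{Step 2 (perturbation term).} First, both distributions must be supported on $\{0,\dots,\kappa\}$. For $P^\star$ this is stated in Theorem~\ref{thm:prob-dis}. For $\mathcal{D}_s$ we use the statement's phrase that $\kappa$ is ``the maximum possible level difference''. Given this, every coefficient satisfies $0\le k\le\kappa$, so
\[
\Bigl|\sum_k k\,(p(k)-p^\star(k))\Bigr| \le \kappa \sum_k |p(k)-p^\star(k)| .
\]
This has the form $\varepsilon\kappa$ once $\sum_k|p(k)-p^\star(k)|\le\varepsilon$, i.e. once closeness is read in $L^1$ (total variation).

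\textbf{Main obstacle: the norm.} The section's procedure says $\varepsilon$-close ``in $L^2$-norm''. Passing from $L^2$ to $L^1$ via Cauchy--Schwarz costs a factor $\sqrt{\kappa+1}$, which would give the weaker bound $\varepsilon\kappa\sqrt{\kappa+1}$. I would first try to avoid this loss by using centering: $\sum_k (p-p^\star)=0$, so we may subtract $\kappa/2$ from each coefficient and get $\sum_k (k-\kappa/2)(p-p^\star)$. Cauchy--Schwarz then bounds this by $\|k-\kappa/2\|_2\,\varepsilon$, but $\|k-\kappa/2\|_2\approx\kappa^{3/2}/\sqrt{12}$, which does not help asymptotically. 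So the honest route is the following. Interpret ``$\varepsilon$-close'' in the corollary as $\sum_k|p(k)-p^\star(k)|\le\varepsilon$, which the $L^2$ formulation implies up to the dimension factor. Equivalently, absorb that factor into $\varepsilon$ and remark that this is what the stated $\varepsilon\kappa$ means. Alternatively, if one insists on a genuine $L^2$ hypothesis, state the bound as $2\frac{c-2}{c-1}+\varepsilon\kappa\sqrt{\kappa+1}$. Under the $L^1$ reading, combining Steps 1 and 2 gives the claimed bound $2\frac{c-2}{c-1}+\varepsilon\kappa$.
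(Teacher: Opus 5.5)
The paper gives no proof of this corollary. It is asserted right after the two-step procedure of Section~\ref{sec:skewed}, and the only justification offered is that applying Theorems~\ref{thm:expected-level-dif} and~\ref{thm:fp-ratio-new} at $s^\star$ reproduces the true quantities ``with error $\varepsilon$''. Your decomposition $\mathbb{E}_{\mathcal{D}_s}[k]=\mathbb{E}_{P^\star}[k]+\sum_k k\,(p(k)-p^\star(k))$ is the natural way to make that remark precise: Theorem~\ref{thm:expected-level-dif} handles the first term and a H\"older estimate handles the second. The argument is correct under the two readings you make explicit, and neither reading is cosmetic. For skewed data, the level difference at the true length $s$ is not automatically bounded by $\lfloor\log(N/s^\star)\rfloor$, and any mass beyond it is weighted by its own larger level, so the support condition has to be assumed. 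Also, the corollary names no norm, while the procedure before it says $L^2$.

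Your suspicion about $L^2$ is correct and can be made sharp.
\begin{itemize}
\item Once both distributions live on $\{0,\dots,\kappa\}$, centering gives $\bigl|\sum_k k\,(p(k)-p^\star(k))\bigr|\le\frac{\kappa}{2}\,\|p-p^\star\|_1$. So the $L^1$ reading even yields $\varepsilon\kappa/2$.
\item Under $L^2$, your centered Cauchy--Schwarz bound is exactly $\varepsilon\sqrt{\kappa(\kappa+1)(\kappa+2)/12}$. This is at most $\varepsilon\kappa$ precisely when $\kappa\le 8$, so for $\kappa\le 8$ the stated bound does follow from $L^2$-closeness.
\item For $\kappa\ge 9$ that estimate is sharp, with extremal direction $p-p^\star\propto k-\frac{\kappa}{2}$. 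Meanwhile the slack $2\frac{c-2}{c-1}-\mathbb{E}_{P^\star}[k]$ left by Theorem~\ref{thm:expected-level-dif} can be of order $\kappa 2^{-\kappa}$, so the perturbation can exceed it.
\end{itemize}
For a concrete case, take $c=3$, $\kappa=9$, $s^\star$ just above $2^{n-9}$ (where $\mathbb{E}_{P^\star}[k]\approx 0.990$) and $\varepsilon=0.2$. Then $p_k=p^\star(k)+0.2\,(k-\frac{9}{2})/\sqrt{82.5}$ is a probability distribution at $L^2$-distance $0.2$ from $P^\star$. Its mean is about $2.807$, which exceeds the claimed bound $2.8$.

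So, as a statement about distributions on $\{0,\dots,\kappa\}$, the literal $L^2$ version fails for $\kappa\ge 9$. The versions you propose are the ones that are actually provable: the $L^1$ reading, or $L^2$ with the extra factor $\sqrt{\kappa+1}$.
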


\begin{corollary}
\label{cor:FP}
For a dataset of $N$ points over domain $M$, and query length $s$, if the levels' difference distribution $\mathcal{D}_s$ is $\varepsilon$-close to a quasi-dyadic distribution from Theorem~\ref{thm:prob-dis} for some query length parameter $s^\star$ and error $\varepsilon>0$, then the expected FP-competitive~ratio~satisfies
\[
\mathbb{E}_{\mathcal{I}_s} \left[\frac{FP_{\text{Tree}}(Q)}{FP_{\text{c-DAG}}(Q)}\right]
\geq \max \left\{1, \frac{1}{2}\left\lfloor\log\frac{N}{s^\star} \right\rfloor\right\} - 2^{\varepsilon\kappa}
\ , 
\]
where $\kappa=\big\lfloor \log(N/s^\star)\big\rfloor$ is the maximum possible level difference.
\end{corollary}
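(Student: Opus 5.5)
The statement to prove is Corollary~\ref{cor:FP}. The plan is to transfer the lower bound of Theorem~\ref{thm:fp-ratio-new}, proved for the quasi-dyadic law $P^\star$ associated with $s^\star$, to the empirical level-difference law $\mathcal{D}_s$. The route goes through the identity $\frac{FP_{\mathrm{Tree}}(Q)}{FP_{\mathrm{c\text{-}DAG}}(Q)} = 2^{k}$, where $k$ is the level difference.

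\textbf{Step 1: rewrite the ratio.} In the skewed setting the nodes are data-dependent, so every node at level $\ell$ still holds $N/2^{\ell}$ points, up to rounding. The ratio of returned node sizes is therefore again $2^{k}$. This gives
\[
\mathbb{E}_{\mathcal{I}_s}\Bigl[\tfrac{FP_{\mathrm{Tree}}}{FP_{\mathrm{c\text{-}DAG}}}\Bigr] = \sum_{k\ge 0} 2^{k}\,\mathcal{D}_s(k).
\]
The corresponding sum $\sum_{k} 2^{k} P^\star(k)$ is bounded below by $\max\{1, \frac{1}{2}\lfloor \log(N/s^\star)\rfloor\}$ by Theorem~\ref{thm:fp-ratio-new}.

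\textbf{Step 2: restrict the support.} I will take both distributions to be supported on $\{0,\dots,\kappa\}$ with $\kappa = \lfloor \log(N/s^\star)\rfloor$. This is consistent with the corollary's description of $\kappa$ as the maximum possible level difference, and I adopt it as part of the closeness hypothesis. Any mass that $\mathcal{D}_s$ places above $\kappa$ only adds nonnegative terms, so it can only increase the sum.

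\textbf{Step 3: compare the two sums.} Write $\Delta(k) = \mathcal{D}_s(k) - P^\star(k)$, so that $\|\Delta\|_2 \le \varepsilon$ and $\sum_k \Delta(k) = 0$. Then
\[
\sum_k 2^{k}\mathcal{D}_s(k) \ge \sum_k 2^{k}P^\star(k) - \sum_k 2^{k}|\Delta(k)|.
\]
It remains to show that $\sum_{k=0}^{\kappa} 2^{k}|\Delta(k)| \le 2^{\varepsilon\kappa}$.

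\textbf{Main obstacle: the weighted error term.} A plain Cauchy--Schwarz bound gives $\varepsilon\sqrt{(4^{\kappa+1}-1)/3}$, which is of order $\varepsilon 2^{\kappa}$ rather than $2^{\varepsilon\kappa}$. So a sharper argument is needed.

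\emph{First attempt.} I will exploit that each pointwise deviation satisfies $|\Delta(k)| \le \min\{\varepsilon, \mathcal{D}_s(k) + P^\star(k)\}$, together with the quasi-dyadic decay $P^\star(k) = O(2^{-k}\cdot 2^{\kappa} s/N)$. Under this decay, $2^{k}P^\star(k)$ is bounded by a constant for every $k$.

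\emph{Fallback.} If that does not suffice, I will view the ratio as $\mathbb{E}[2^{K}]$ and compare it to $2^{\mathbb{E}[K]}$ via Jensen. By Corollary~\ref{cor:level-difference}, $\mathbb{E}_{\mathcal{D}_s}[K]$ and $\mathbb{E}_{P^\star}[K]$ differ by at most $\varepsilon\kappa$. The idea is then to convert this additive shift in the exponent into the $2^{\varepsilon\kappa}$ correction.

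Honestly, I expect the stated $2^{\varepsilon\kappa}$ to hold only under the implicit reading that the deviation acts additively on the exponent, as in the Jensen route. The writeup should therefore make that interpretation of ``$\varepsilon$-close'' explicit, and then conclude by combining it with Theorem~\ref{thm:fp-ratio-new}.
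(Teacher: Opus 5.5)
\textbf{There is a genuine gap, and it cannot be closed: the bound does not follow from $L^2$-closeness.} The paper gives no derivation either. Section~\ref{sec:skewed} only asserts that applying Theorem~\ref{thm:fp-ratio-new} at $s^\star$ is accurate ``with error $\varepsilon$''. Your Step~3 needs $\sum_{k\le\kappa}2^k|\Delta(k)|\le 2^{\varepsilon\kappa}$, which is false. So is the one-sided version that actually matters, $\sum_k 2^k\Delta(k)^-\le 2^{\varepsilon\kappa}$ with $\Delta(k)^-\le P^\star(k)$.

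The cause is the flat profile you noticed yourself. In the small-$s$ regime, $2^kP^\star(k)=a:=2^\kappa s^\star/(N-s^\star)$ for every $k\in\{1,\dots,\kappa\}$. For $k>\log_2(a/\varepsilon)$ this contribution rests on a mass $P^\star(k)<\varepsilon$, which $L^2$-closeness cannot detect.

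Here is a concrete counterexample. Take $c=5$ and $s^\star=\frac32\,2^{n-40}$, so $\kappa=40$ and $a\approx\frac34$. Let $\mathcal{D}_s$ be $P^\star$ with all mass on $\{k_0,\dots,\kappa\}$ moved to $k=0$. Then
\[
\|\mathcal{D}_s-P^\star\|_2^2<\bigl(a\,2^{1-k_0}\bigr)^2+\tfrac43\,a^2\,4^{-k_0},\qquad \sum_k 2^k\mathcal{D}_s(k)=1+(k_0-2)a+a\,2^{1-k_0}<k_0 .
\]
With $k_0=8$, $\mathcal{D}_s$ is $0.01$-close to $P^\star$ and its expected ratio is below $8$. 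The corollary with $\varepsilon=0.01$ would demand at least $20-2^{0.4}>18$.

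This also shows why your First attempt fails. Bounding each term by a constant still leaves roughly $\kappa-\log_2(a/\varepsilon)$ terms that can be deleted entirely. That is a loss linear in $\kappa$, while $2^{\varepsilon\kappa}\approx 1$ when $\varepsilon\kappa$ is small. What $L^2$-closeness does give is $\sum_k2^k\mathcal{D}_s(k)\ge\sum_k 2^k\bigl(P^\star(k)-\varepsilon\bigr)^+$. That is a gain of order $a\min\{\kappa,\log_2(a/\varepsilon)\}$, and the example shows this order is the right one.

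\textbf{The Fallback cannot rescue it.} Jensen gives $\mathbb{E}[2^K]\ge 2^{\mathbb{E}[K]}$. But Theorem~\ref{thm:expected-level-dif} gives $\mathbb{E}_{P^\star}[K]<2$, so even with zero error this route yields a constant below $4$. The entire $\frac12\lfloor\log(N/s^\star)\rfloor$ gain lives in the Jensen gap. There are two further problems with this route:
\begin{itemize}
\item Corollary~\ref{cor:level-difference} is only an upper bound on $\mathbb{E}_{\mathcal{D}_s}[K]$, so it cannot supply the lower bound on the exponent you would need.
\item A multiplicative loss $2^{-\varepsilon\kappa}$ is not the same as subtracting $2^{\varepsilon\kappa}$.
\end{itemize}
Finally, reinterpreting ``$\varepsilon$-close'' until the conclusion holds changes the statement rather than proving it.

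If you want a correct version, there are two options. You could assume closeness in the weighted norm $\sum_k 2^k|\Delta(k)|\le\varepsilon$. That gives $\max\{1,\frac12\lfloor\log\frac{N}{s^\star}\rfloor\}-\varepsilon$ directly from Theorem~\ref{thm:fp-ratio-new} and your Step~3. Alternatively, keep $L^2$-closeness and weaken the conclusion to order $\min\{\kappa,\log(1/\varepsilon)\}$.
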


\section{Experimental Evaluation}
\label{sec:experiments}
We evaluate the practical performance of the $c$-DAG structures ($c=3$ and $c=5$) against the baseline 1D-Tree for real-world and synthetic data, focusing on whether the empirical level difference distribution matches the truncated quasi-dyadic prediction from Theorem~\ref{thm:prob-dis}, and quantify the resulting expected search time overhead and expected false-positive overhead, as per Corollaries~\ref{cor:level-difference} and \ref{cor:FP}, respectively.

\noindent
\textbf{Datasets.}
We use two datasets with $N=2^{22}=4\,194\,304$ distinct timestamps over the range $[0, 49\,626\,707]$ seconds.

\begin{itemize}
    \item \textbf{Gowalla}: Real-world check-ins from the Gowalla location-based social network (2009–2010) \cite{Cho2011Friendship}. Timestamps are normalized, sorted, deduplicated (by removing records with duplicated timestapms), and truncated to the first $N$ values. The data shows natural temporal clustering (e.g., evenings, weekends).
    \item \textbf{Uniform}: Synthetic baseline with equally spaced $N$ timestamps over the same interval  $  [0,49626707]  $ seconds, to isolate distribution effects.
\end{itemize}
\textbf{Data Structures and SRC-Search.}
We build three structures in Python: the standard 1D-Tree, 3-DAG ($c=3$), and 5-DAG ($c=5$). For any query interval $Q = [x, x+s)$, SRC-search returns the deepest node whose canonical interval fully contains $Q$. We record the returned level $\ell$ and compute the level difference $k = \ell_{\text{c-DAG}} - \ell_{\text{Tree}} \ge 0$.

\begin{figure} [t!]
    \centering
    \includegraphics[width=0.55\textwidth, height=3.5cm]{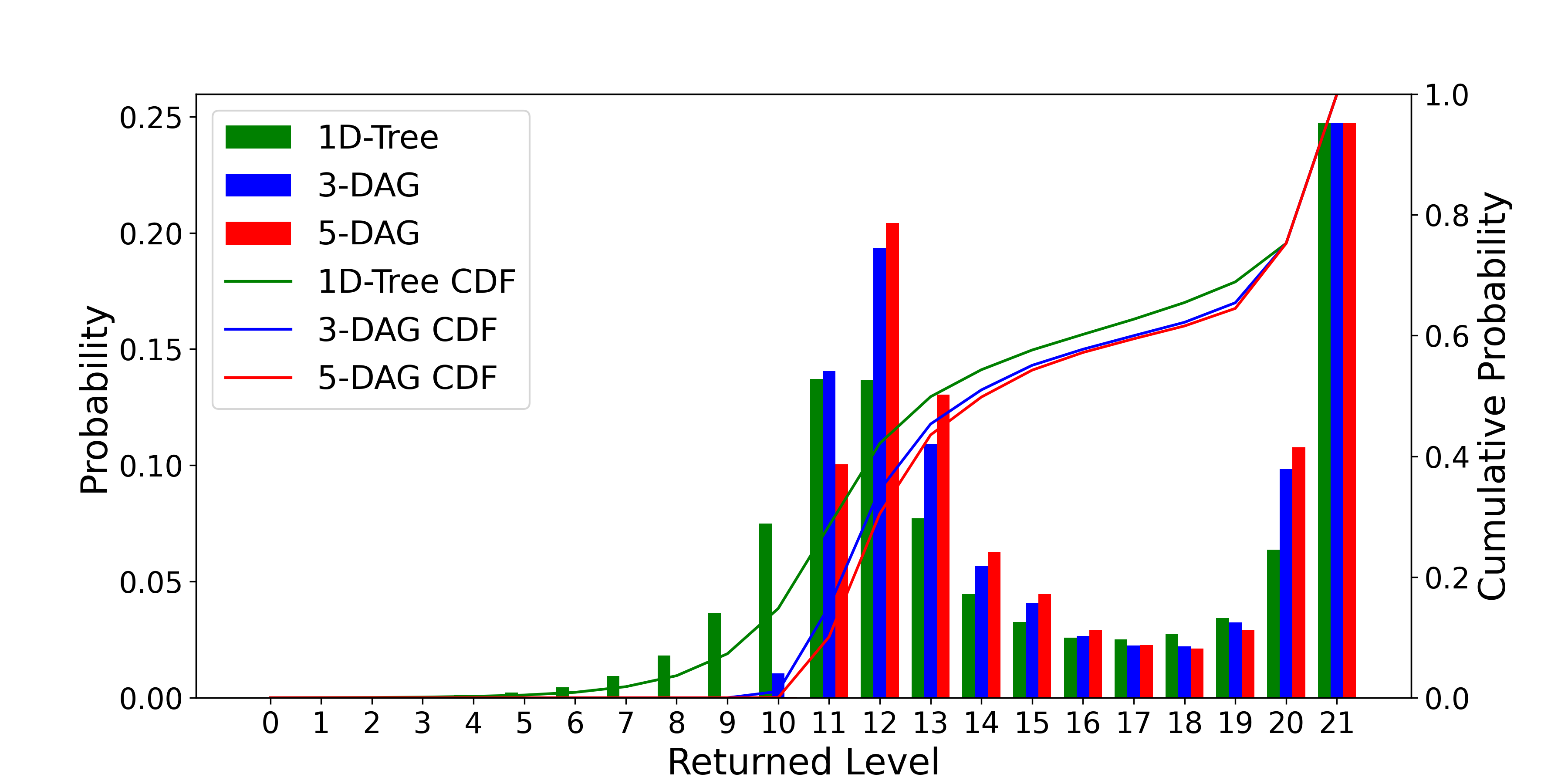}
    \caption{Returned level distribution with cumulative probability lines for $  s=3600  $ (1 hour) on the Gowalla dataset. Green, blue, and red bars represent the empirical probabilities for the 1D-Tree, 3-DAG, and 5-DAG, respectively (left y-axis). The corresponding lines show the cumulative distribution functions, CDFs (right y-axis).}
    \label{fig level dist3600}
\end{figure}

\begin{figure} [t!]
    \centering
    \includegraphics[width=0.55\textwidth, height=3.5cm]{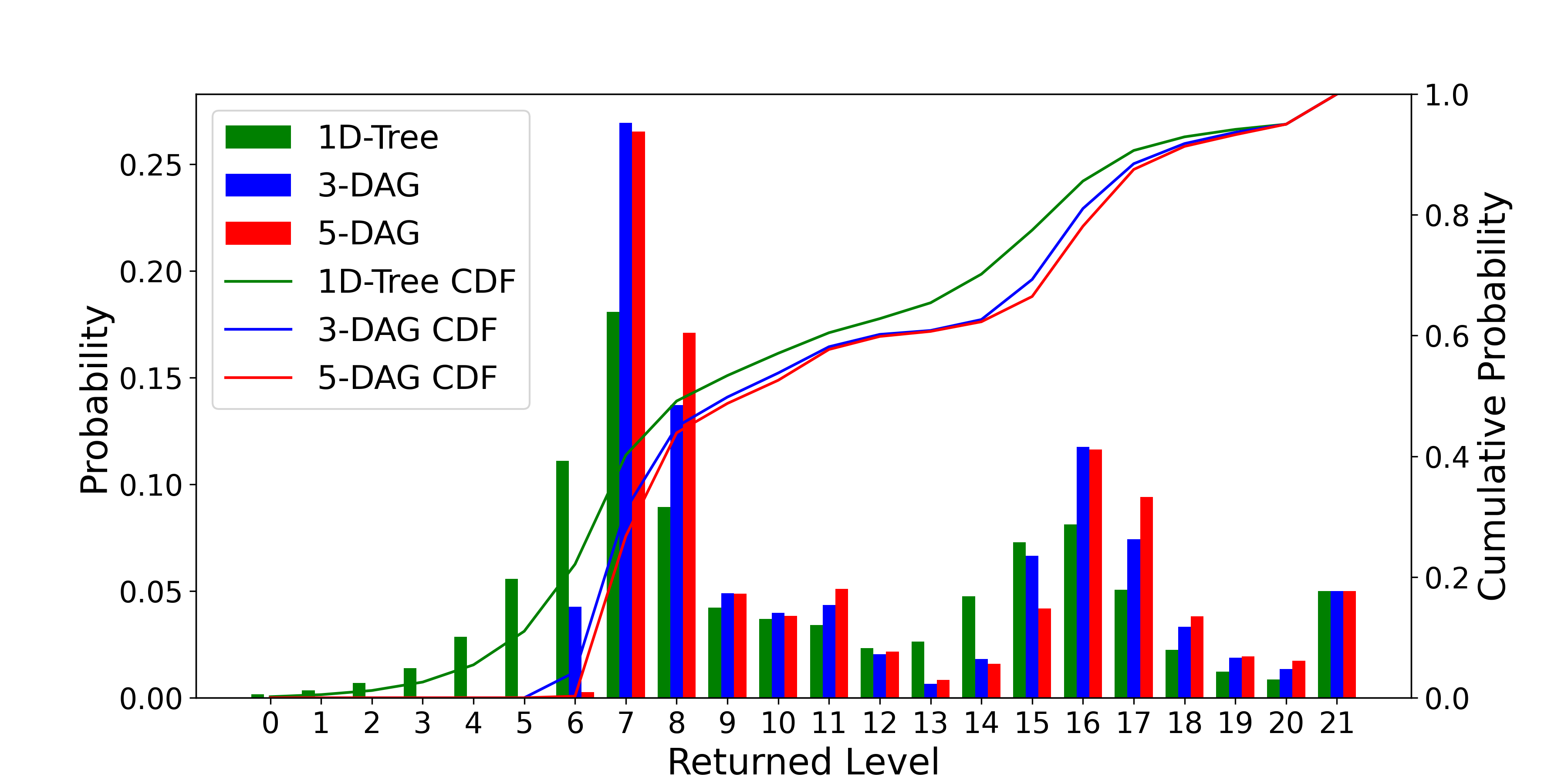}
    \caption{ Returned level distribution with cumulative probability lines for $s=86400$ (1 day) on the Gowalla dataset. Green, blue, and red bars represent the empirical probabilities for the 1D-Tree, 3-DAG, and 5-DAG, respectively (left y-axis). The corresponding dashed lines show the cumulative distribution functions, CDFs (right y-axis).}
    \label{fig level dist86400}
\end{figure}

\noindent
\textbf{Query Generation and Stabilization.}
Queries are generated with starting point $x$ uniform in $ [0, N-s]$ and lengths $s \in \{60, 3600, 86400, 604800\}$ seconds (1 min to 1 week). Theoretical $s^* = \lfloor s / \text{step} \rfloor$ where step $\approx 11.8$ seconds.  
For each $s$, we perform up to 200\,000 random queries in batches of 500, monitoring the $L^2$-norm between consecutive empirical level-difference distributions. Stabilization is declared when the norm drops below $0.001$ for both $c$-DAGs. We then add 120\,000 extra queries for stable statistics, post stabilazation. We obtained 133\,500 queries for query length $  s=3600  $ and 130\,500 queries for $  s=86400  $, post stabilization.

\noindent
\textbf{Metrics.}
We compute and/or report:
\begin{itemize}
    \item Empirical LDD, 
    and its $L^2$ distance to the closest theoretical LDD distribution. 
    \item Expected level difference  
    (additive overhead). 
    \item Expected false-positive multiplier 
    (multiplicative overhead). 
\end{itemize}

\begin{figure}[t!]
\centering
\begin{subfigure}[b]{0.47\textwidth}
\centering
\includegraphics[width=\textwidth, height=3.5cm]{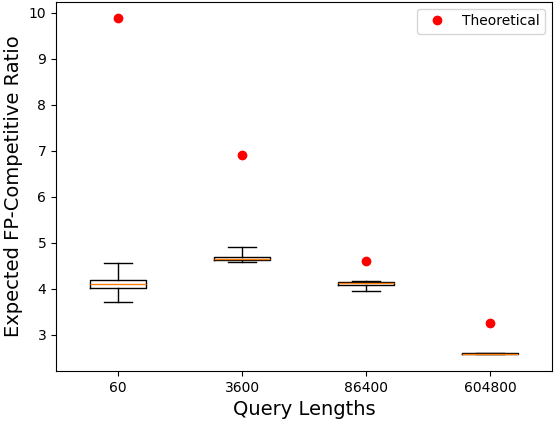}
\caption{3-DAG vs. 1D-Tree.}
\label{fig:dag3-whisker}
\end{subfigure}
\hfill
\begin{subfigure}[b]{0.47\textwidth}
\centering
\includegraphics[width=\textwidth, height=3.5cm]{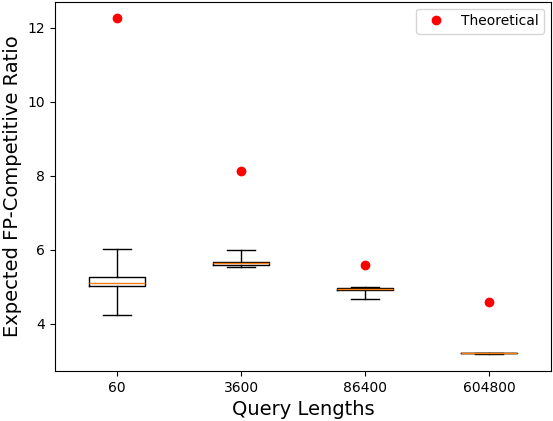}
\caption{5-DAG vs. 1D-Tree .}
\label{fig:dag5-whisker}
\end{subfigure}
\caption{ Empirical false positive competitive ratio, post-stabilization, meaning after running 120\,000 extra queries, plotted for query lengths one minute, one hour, one day, and one week.}
\label{fig:whisker-plots}
\end{figure}

\noindent
{\bf Results: Returned Level Distributions.}
Figures~\ref{fig level dist3600} and \ref{fig level dist86400} show the returned level distributions on Gowalla for $  s=3600  $ (1 hour) and $  s=86400  $ (1 day), respectively. Green bars represent the 1D-Tree, blue bars the 3-DAG, and red bars the 5-DAG. Both $  c  $-DAGs consistently return deeper levels than the 1D-Tree, resulting in tighter covers and fewer false positives. After stabilization, we obtained 133\,500 queries for $  s=3600  $ and 130\,500 queries for $  s=86400  $. The 1D-Tree returned the root node (the entire dataset) in 10 out of 133,500 queries ($  s=3600  $) and 224 out of 130,500 queries ($  s=86400  $), which makes it impractical in real-world use unless restrictions or limitations are imposed on the data structure. In contrast, the $  c  $-DAGs returned levels in bounded ranges: $  \ell \in \{10, \dots, 21\}  $ for $  s=3600  $ and $  \ell \in \{6, \dots, 21\}  $ for $  s=86400  $. This property makes the $  c  $-DAG data structure practical, as it limits false positives in the returned set. The 5-DAG, in particular, ensures even tighter covers and fewer false positives. However, the advantage of the 5-DAG over the 3-DAG seems not as substantial as the advantage of both $  c  $-DAGs over the 1D-Tree, as shown in Figures~\ref{fig level dist3600} and \ref{fig level dist86400}.

\noindent
\textbf{Results: Expected FP-ratios.}
Figure \ref{fig:whisker-plots} presents box-whisker plots of the expected FP-competitive ratio $  \mathbb{E}[2^k]  $ across query ranges on Gowalla for 3-DAG and 5-DAG comparing with the 1D-Tree. Empirical values (box plots) are consistently below the theoretical lower bound (red dots), computed from Corollary~\ref{cor:FP} using the expected level difference.
This indicates that the $c$-DAGs introduce even fewer false positives than predicted, thus achieving greater efficiency in practice. This underscores that $c$-DAGs outperform the 1D-Tree in FP-competitive ratio, with the 5-DAG yielding the lowest values due to its higher branching factor.

All code and generated plots are published in a \href{https://github.com/ladan90/Data-Structures-Simulation}{public GitHub repository}.

\vspace*{-0.5ex}
\section{1D-Tree vs. \lowercase{c}-DAG: Security Analysis}
\label{sec: security}

\vspace*{-0.5ex}
In privacy-preserving applications such as searchable encryption, the choice of lookup data structure not only affects efficiency and accuracy but also has direct implications for security. We compare 1D-Tree and $  c  $-DAG from the privacy angle, focusing on two observables: the level of the returned node and the size of that node (false positive count). We analyze them below.

\vspace*{-1ex}
\paragraph{Returned Level and query length Leakage.}
In general, deeper return levels correspond to smaller queries, while shallower levels suggest larger queries. However, in the 1D-Tree, small queries can still be covered by a shallow node if they straddle a primitive boundary in the dyadic splitting. The 1D-Tree may return nodes from level $0$ to $  \kappa  $, depending on how the query straddles dyadic splits. This creates ambiguity: a shallow node might mean a genuinely large query or just unlucky alignment of the small queries. The adversary cannot reliably infer query length from level alone. In contrast, $  c  $-DAG returns nodes only at level $  \kappa  $ or $  \kappa-1  $ for query length $  s  $ on uniform data, where $  \kappa = \lfloor \log \frac{N}{s} \rfloor  $. The returned level therefore maps almost directly to query length, making query length leakage easier.\\

\vspace*{-1ex}
\paragraph{Measuring Uncertainty with Shannon Entropy.} 
The difference in level variability directly translates to uncertainty.
To quantify uncertainty, we use Shannon entropy \cite{shannon1950entropy}:
$H = -\sum_\ell \mathbb{P}(\ell) \log \mathbb{P}(\ell)$ ,
where $\mathbb{P}(\ell)$ denotes the probability of returning a node at level $\ell$ for a random query $Q$. A lower entropy indicates that the return level is more predictable, potentially allowing an adversary to infer query characteristics, while a higher entropy reflects greater uncertainty and stronger resistance to inference attacks. For a fixed query length, the SRC-search on the 1D-Tree returned level distribution is wider (due to boundary effects), while $c$-DAG collapses to fewer variable levels, at most two adjacent levels in uniform datasets. This translates to higher Shannon entropy in 1D-Tree, further strengthening its privacy advantage against inference based on observed levels than $c$-DAG.

\vspace*{-1ex}
\paragraph{False Positive ratio and Privacy Amplification.}
The $  c  $-DAG tends to return nodes at deeper levels, covering smaller intervals with fewer false positives. They designed to minimize false positives, offers tighter containment but may reveal more structural information about the query~range.
Higher false-positive ratios introduce natural noise that hides the true query result; lower ratios remove that fog. 
The 1D-Tree, while less precise, tends to return up to a logarithmically larger nodes
and thus includes more extraneous entries, provides stronger obfuscation: the adversary sees more data and infers less.

\vspace*{-0.5ex}
\section{Conclusions and Open Directions}
\label{sec:conclusion}

\vspace*{-0.5ex}
We demonstrated a competitive framework to formally compare different data-dependent structures with range queries, and used it to prove several advantages of $c$-DAG over 1D-Tree (with only a few minor drawbacks).
It would be interesting to apply our methodology to compare other data-dependent structures, and extend it to multi-dimensional structures. 
Our main technical tool -- levels' difference distribution (LLD), impacts both performance and security of data-dependent structures, but its dependence on specific configurations of data points requires more study.
Additionally, other performance and security measures could be explored in our framework.
Finally, are there any other sparse data structures that are even more efficient than $c$-DAG?
If yes, what is the trade-off between their performance and security measures?

\bibliographystyle{plain}
\bibliography{Reference}

\end{document}